\newcommand{\ucite}[1]{\textsuperscript{\cite{#1}}}
\begin{document}
	
	\preprint{APS/123-QED}
	
	\title{Strongest nonlocal sets with minimum cardinality in tripartite systems}
	
	\author{Xiao-Fan Zhen }
	\affiliation{School of Mathematical Sciences, Hebei Normal University, Shijiazhuang, 050024, China}

	\author{Mao-Sheng Li}
\affiliation{ School of Mathematics,
	South China University of Technology, Guangzhou
	510641,  China}

	\author{Hui-Juan Zuo }
	\email{huijuanzuo@163.com}
	
	\affiliation{School of Mathematical Sciences, Hebei Normal University, Shijiazhuang, 050024, China}%
	
	%\date{\today}% It is always \today, today,
	%  but any date may be explicitly specified
	
	\begin{abstract}
Strong nonlocality, proposed by Halder {\it et al}. [\href{https://doi.org/10.1103/PhysRevLett.122.040403}{Phys. Rev. Lett. \textbf{122}, 040403 (2019)}], is a stronger manifestation than quantum nonlocality. Subsequently, Shi {\it et al}. presented the concept of the strongest nonlocality [\href{https://doi.org/10.22331/q-2022-01-05-619}{Quantum \textbf{6}, 619 (2022)}].
		Recently, Li and Wang [\href{https://doi.org/10.22331/q-2023-09-07-1101}{Quantum \textbf{7}, 1101 (2023)}] posed the conjecture about a lower bound to the cardinality of the strongest nonlocal set $\mathcal{S}$ in $\otimes _{i=1}^{n}\mathbb{C}^{d_i}$, i.e., $|\mathcal{S}|\leq \max_{i}\{\prod_{j=1}^{n}d_j/d_i+1\}$. In this work, we construct the strongest nonlocal set of size $d^2+1$ in $\mathbb{C}^{d}\otimes \mathbb{C}^{d}\otimes \mathbb{C}^{d}$. Furthermore, we obtain the strongest nonlocal set of size $d_{2}d_{3}+1$ in $\mathbb{C}^{d_1}\otimes \mathbb{C}^{d_2}\otimes \mathbb{C}^{d_3}$. Our construction reaches the lower bound, which provides an affirmative solution to Li and Wang's conjecture. In particular, the strongest nonlocal sets we present here contain the least number of orthogonal states among the available results.
		\begin{description}
			\item[PACS numbers]
			03.65.Ud, 03.67.Mn
		\end{description}
	\end{abstract}
	
	\pacs{Valid PACS appear here}% PACS, the Physics and Astronomy
	% Classification Scheme.
	%\keywords{Suggested keywords}%Use showkeys class option if keyword
	%display desired
	\maketitle
	
	%\tableofcontents
	
	\section{\label{sec:level1}Introduction\protect}

In 1964, Bell derived the famous Bell's inequality \cite{Bell}. Since then, many experiments have been performed to demonstrate violation of Bell's inequality, indicating that the pure entangled states have Bell nonlocality. Unlike Bell nonlocality, in 1999, Bennett {\it et al}. \cite{Bennett1999} found that the set of orthogonal product states, which is not perfectly distinguishable under local operations and classical communication (LOCC), also reflects quantum nonlocality. In Ref. \cite{Mal}, Mal and Sen attempted to unify the concepts of Bell nonlocality and Bennett nonlocality in quantum information theory.

A known set of orthogonal quantum states is said to have the property of quantum nonlocality if it is not possible to distinguish them via LOCC. Quantum nonlocality, i.e., Bennett nonlocality, which is widely used in quantum secret sharing and quantum data hiding \cite{Yang2015,Wangj2017,Jiangdh2020,GuoGP2001,Rahaman2015,DiVincenzo2002}, has given rise to research and fruitful results in the past two decades \cite{Walgate2000,Ghosh2001,Walgate2002,Ghosh2004,Chen2004,Niset2006, Cohen2007,Xin2008,Zhang2014,Yu2015,Wang2015,Zhang2015,Zhangxq2016,Xu2016,Zhangzc2016,Wang2017,Zhang2017,Halder2018,Jiang2020,Xu2021, Zuo2022,Zhen2022,Zhu2023,Cao2023}.

	A stronger manifestation of quantum nonlocality  in multipartite quantum systems, known as strong quantum nonlocality, was discovered by Halder  {\it et al.}  \cite{Halder2019}. A set of orthogonal states is said to be strongly nonlocal if it is locally irreducible in every bipartition. Indeed,  a locally irreducible set is locally indistinguishable, but the  converse is not true in general. They also constructed orthogonal product bases with strong nonlocality in $\mathbb{C}^3\otimes\mathbb{C}^3\otimes\mathbb{C}^3$ and $\mathbb{C}^4\otimes\mathbb{C}^4\otimes\mathbb{C}^4$. Soon after, numerous results have emerged concerning the existence of orthogonal product sets (OPSs) and orthogonal entangled sets (OESs) with strong nonlocality   \cite{Halder2019,Zhangzc2019,Yuan2020,Zhou2022,He,Zhou2023,Shi2022Quantum, Shi2021JPA,Che2022,Shi2020PRA,Wang2021,Shi2022PRA,Hu2024,Li2023,Li2023PRA} (see Table~\ref{results} for a summary).
	
	For strongly nonlocal OPSs, Yuan {\it et al}. \cite{Yuan2020} presented a strongly nonlocal OPS of $6(d-1)^2$ (with respect to $6d^2-8d+4$) elements in $\mathbb{C}^d\otimes\mathbb{C}^d\otimes\mathbb{C}^d$ (with respect to $\mathbb{C}^d\otimes\mathbb{C}^d\otimes\mathbb{C}^{d+1}$). Meanwhile, they gave some examples of strongly nonlocal OPSs in $\mathbb{C}^3\otimes\mathbb{C}^3\otimes\mathbb{C}^3\otimes\mathbb{C}^3$ and $\mathbb{C}^4\otimes\mathbb{C}^4\otimes\mathbb{C}^4\otimes\mathbb{C}^4$. After that, Zhou {\it et al}. proposed strongly nonlocal OPSs of smaller size in arbitrary three and four-partite systems \cite{Zhou2022}. By using a general decomposition of the $N$-dimen- sional hypercubes for odd $N\geq 3$, He {\it et al}. \cite{He} showed a strongly nonlocal OPS of size $\prod_{i=1}^{N} d_{i}-\prod_{i=1}^{N} (d_{i}-2)$ in $\otimes _{i=1}^{N}\mathbb{C}^{d_i}$. In general $n$-partite systems with even $n$, Zhou {\it et al}. put forward a construction of strongly nonlocal OPSs \cite{Zhou2023}. In particular, an unextendible product basis (UPB) can also exhibit strong nonlocality. In Ref. \cite{Shi2022Quantum}, Shi {\it et al}. showed the existence of UPBs that are locally irreducible in every bipartition in $\mathbb{C}^d\otimes\mathbb{C}^d\otimes\mathbb{C}^d$. Moreover, the above construction can be generalized to arbitrary three- and four-partite systems \cite{Shi2021JPA}. In 2022, Che {\it et al}. \cite{Che2022} provided a strongly nonlocal UPB with cardinality $(d-1)^3+2d+5$ in $\mathbb{C}^d\otimes\mathbb{C}^d\otimes\mathbb{C}^d$ and generalized this approach to arbitrary tripartite systems.
	
	For strongly nonlocal OESs, Shi {\it et al}. \cite{Shi2020PRA} obtained the first result of strongly nonlocal sets with entanglement. In Ref. \cite{Wang2021}, Wang {\it et al}. related orthogonal genuinely entangled sets (OGESs) with strong nonlocality to the connectivities of graphs. They also proposed strongly nonlocal OGESs of size $d^3-(d-2)^2$ ($d$ is odd) and $d^3-(d-2)^2+2$ ($d$ is even) in $\mathbb{C}^d\otimes\mathbb{C}^d\otimes\mathbb{C}^d$. Further, they extended this result to the general case. Several strongly nonlocal OESs with cardinality $d^N-(d-1)^N+1$ have been presented by Shi {\it et al}. in $(\mathbb{C}^d)^{\otimes N}$, where $N\geq 3$ and $d\geq 2$ \cite{Shi2022PRA}. When $N=3$ and $4$, the sets they constructed were strongly nonlocal OGESs. In $N$-qutrit systems, Hu {\it et al}. \cite{Hu2024} constructed a strongly nonlocal OGES of size $2\times 3^{N-1}$.
		\begin{table*}[htbp]
		\newcommand{\tabincell}[2]{\begin{tabular}{@{}#1@{}}#2\end{tabular}}
		\centering
		\caption{\label{results}Results about the strongest nonlocal sets in $\mathbb{C}^{d}\otimes \mathbb{C}^{d}\otimes \mathbb{C}^{d}$ or $\mathbb{C}^{d_1}\otimes \mathbb{C}^{d_2}\otimes \mathbb{C}^{d_3}$. }
		\begin{tabular}{ccc}
			\toprule
			\hline
			\hline
			\specialrule{0em}{1.5pt}{1.5pt}
			~Type~~&~~Cardinality~~&~~~References~~\\
			\specialrule{0em}{1.5pt}{1.5pt}
			\midrule
			\hline
			\specialrule{0em}{1.5pt}{1.5pt}
			\tabincell{c}{OPS}&\tabincell{c}{$6(d-1)^2$}&~~\tabincell{c}{\cite{Yuan2020}} \\
			\specialrule{0em}{3.5pt}{3.5pt}
			\tabincell{c}{OPS}&\tabincell{c}{$2[(d_{1}d_{2}+d_{2}d_{3}+d_{1}d_{3})$\\$-3(d_1+d_2+d_3)+12]$}&~~\tabincell{c}{\cite{Zhou2022}} \\
			\specialrule{0em}{3.5pt}{3.5pt}
			\tabincell{c}{UPB}&\tabincell{c}{$d_{1}d_{2}d_{3}-8(n+1),~0\leq n\leq \lfloor \frac{d_{1}-3}{2}\rfloor$}&~~\tabincell{c}{\cite{Shi2021JPA}}\\
			\specialrule{0em}{3.5pt}{3.5pt}
			\tabincell{c}{UPB}&\tabincell{c}{$(d-1)^3+2d+5$}&~~\tabincell{c}{\cite{Che2022}}\\
			\specialrule{0em}{3.5pt}{3.5pt}
			\tabincell{c}{OES}&\tabincell{c}{$d^3-d$ ($d$ is odd)\\$d^3-d-6$ ($d$ is even)}&~~\tabincell{c}{\cite{Shi2020PRA}}\\
			\specialrule{0em}{3.5pt}{3.5pt}
			\tabincell{c}{OGES}&\tabincell{c}{$d^3-(d-2)^2$ ($d$ is odd)\\$d^3-(d-2)^2+2$ ($d$ is even)}&~~\tabincell{c}{\cite{Wang2021}}\\
			\specialrule{0em}{3.5pt}{3.5pt}
\tabincell{c}{OGES}&\tabincell{c}{$d^3-(d-1)^3+1$}&~~\tabincell{c}{\cite{Shi2022PRA}}\\
			\specialrule{0em}{3.5pt}{3.5pt}		
			\tabincell{c}{OGES\\and $|000\rangle$}&\tabincell{c}{$\prod_{i=1}^{3}d_{i}-\prod_{i=1}^{3}(d_i-1)$}&~~\tabincell{c}{\cite{Li2023}}\\
			\specialrule{0em}{3.5pt}{3.5pt}
			\tabincell{c}{OGES\\and $|000\rangle$}&\tabincell{c}{$d_2d_3+d_1-1$}&~~\tabincell{c}{\cite{Li2023PRA}}\\		
			\specialrule{0em}{3.5pt}{3.5pt}
			\tabincell{c}{OGES \\and $|S_2\rangle$}&\tabincell{c}{$d^2+1$~(the lower bound)}&~~\tabincell{c}{Theorem~\ref{th:ddd}}\\
			\specialrule{0em}{3.5pt}{3.5pt}
			\tabincell{c}{OES \\and $|S\rangle$}&\tabincell{c}{$d_{2}d_{3}+1$~(the lower bound)}&~~\tabincell{c}{Theorem~\ref{th:d1d2d3}}\\		
			\bottomrule
			\specialrule{0em}{1.5pt}{1.5pt}
			\hline
			\hline
		\end{tabular}
	\end{table*}

	Recently, Li and Wang \cite{Li2023} proposed the definition of a locally stable set and gave two conjectures: (a) There exists the smallest set of cardinality max$_{i}\{d_i+1\}$ of orthogonal states that is locally stable in $\otimes _{i=1}^{n}\mathbb{C}^{d_i}$. (b) The smallest set exhibiting the strongest nonlocality in $\otimes_ {i=1}^{n}\mathbb{C}^{d_i}$ has a cardinality of $ \max_ {i}\{\prod_{j=1}^{n}d_j/d_i+1\}$. In Ref. \cite{Cao2023PRA}, Cao {\it et al}. provided a locally stable set of size max$_{i}\{d_i+1\}$, which showed that the first conjecture holds. Li {\it et al}. \cite{Li2023PRA} constructed the strongest nonlocal sets of size $d_2d_3+d_1-1$ (with respect to $d^3+d-1$) in $\mathbb{C}^{d_1}\otimes\mathbb{C}^{d_2}\otimes\mathbb{C}^{d_3}$ (with respect to $\mathbb{C}^{d}\otimes\mathbb{C}^{d}\otimes\mathbb{C}^{d}\otimes\mathbb{C}^{d}$). The above constructions reach the lower bound only in $\mathbb{C}^2\otimes\mathbb{C}^{d_2}\otimes\mathbb{C}^{d_3}$ and $\mathbb{C}^{2}\otimes\mathbb{C}^{2}\otimes\mathbb{C}^{2}\otimes\mathbb{C}^{2}$. Up to now, whether the conjecture holds in general tripartite systems remains an open question worth considering.
	
	In this work, we construct the strongest nonlocal sets with minimum cardinality in general tripartite systems, confirming the conjecture raised in Ref. \cite{Li2023}. First, we present the strongest nonlocal set of size $d^2+1$ in $\mathbb{C}^d\otimes\mathbb{C}^d\otimes\mathbb{C}^d$ for $d\geq 3$, which consists of the stopper state and special genuinely entangled states [Greenberger-Horne-Zeilinger](GHZ)-like and $W$-like states, the corresponding Rubik's cube exhibits the perfect geometric symmetry. Moreover, we prove that the smallest orthogonal set of size $21$ is the strongest nonlocal set in $\mathbb{C}^3\otimes\mathbb{C}^4\otimes\mathbb{C}^5$ and provides the structure to general tripartite systems $\mathbb{C}^{d_1}\otimes\mathbb{C}^{d_2}\otimes\mathbb{C}^{d_3}$.
	
	\theoremstyle{remark}
	\newtheorem{definition}{\indent Definition}
	\newtheorem{lemma}{\indent Lemma}
	\newtheorem{theorem}{\indent Theorem}
	\newtheorem{proposition}{\indent Proposition}
	\newtheorem{observation}{\indent Observation}
	\newtheorem{example}{\indent Example}
	\newtheorem{corollary}{\indent Corollary}
	\def\QEDclosed{\mbox{\rule[0pt]{1.3ex}{1.3ex}}}
	\def\QED{\QEDclosed}
	\def\proof{\indent{\em Proof}.}
	\def\endproof{\hspace*{\fill}~\QED\par\endtrivlist\unskip}

\section{Preliminaries}
Throughout this paper, we only consider pure quantum states. For the sake of convenience, the states discussed in this paper are un-normalized quantum states. We define $\mathbb Z_{d}=\{0, 1,\cdots, d-1\}$ and choose the computational basis $\{|i\rangle \mid i\in \mathbb Z_{d_k}\}$ for each $d_k$-dimensional subsystem. For each integer $d\geq 2$, $\omega_d=e^{\frac{2\pi \sqrt{-1}}{d}}$, i.e., a primitive $d$-th root of unit.

Consider a positive operator-valued measure (POVM) that performed on subsystems, each POVM element can be represented by a $d\times d$ matrix, denote $E=(m_{i,j})_{i,j\in \mathbb Z_{d}}$, under the computational basis of subsystems.  A measurement is an \emph{trivial} if all its POVM elements are proportional to the identity operator. Otherwise, the measurement is called nontrivial. A measurement is \emph{orthogonality-preserving local measurement} (OPLM) if the post-measurement states are mutually orthogonal.

\begin{definition}({\bf The strongest nonlocality})\ucite{Shi2022Quantum}
A set $\mathcal{S}$ of orthogonal multipartite  quantum  states  is said to be of \emph{the strongest nonlocality} if only trivial OPLM can be performed for each bipartition of the subsystems.
\end{definition}

In a general tripartite quantum system $ \mathcal{H}_A\otimes \mathcal{H}_B\otimes \mathcal{H}_C=\mathbb{C}^{d_1}\otimes\mathbb{C}^{d_2} \otimes \mathbb{C}^{d_3}$ ($d_1\leq d_2\leq d_3$), there are three different bipartitions: $A|BC$, $B|CA$, and $C|AB$.  Given a set $\mathcal{S}$ of orthogonal states in $ \mathcal{H}_A\otimes \mathcal{H}_B\otimes \mathcal{H}_C$, to demonstrate the set $\mathcal{S}$  is the strongest nonlocal set, it is sufficient to show that each joint party $BC$, $CA$, and $AB$ can only start with a trivial OPLM. For example,  if $E_{BC}=M_{BC}^{\dagger}M_{BC}$ is one of $BC$'s measurement elements, we use the orthogonality relations $\{\mathbb{I}_A\otimes M_{BC} |\Psi\rangle \}_{|\Psi\rangle \in \mathcal{S}}$, that is,
\begin{equation}\label{eq:OR}
	\langle \Phi|\mathbb{I}_A\otimes   E_{BC} |\Psi\rangle= \langle \Phi|\mathbb{I}_A\otimes M_{BC}^\dagger M_{BC} |\Psi\rangle =0,
\end{equation}
$\forall ~|\Phi\rangle \neq |\Psi\rangle \in \mathcal{S},$ to show that $E_{BC}\propto \mathbb{I}_{BC}$ ($E_{CA}$ and $E_{AB}$ are similar). Throughout this paper, we assume $E_{BC}=(m_{ij,kl})_{ij,kl\in \mathbb{Z}_{d_2} \times \mathbb{Z}_{d_3}}$, $E_{CA}=(m_{ij,kl})_{ij, kl \in \mathbb Z_{d_3}\times \mathbb Z_{d_1}}$ and $E_{AB}=(m_{ij,kl})_{ij, kl \in \mathbb Z_{d_1}\times \mathbb Z_{d_2}}$.

 A state $|\phi\rangle_{ABC}$ is called \emph{a genuinely entangled state} if it is entangled in every bipartition. It is well known that the three-qubit GHZ state ($|000\rangle+|111\rangle)/\sqrt{2}$ and the $W$ state ($|100\rangle+|010\rangle+|001\rangle)/\sqrt{3}$ are genuinely entangled states.
The states considered throughout this paper (see Fig. \ref{fig:333} for an example)  can be separated into the following three classes:

\begin{enumerate}
	\item [\rm(i)]\emph{the GHZ-like states:} $|i_1\rangle_{A}|j_1\rangle_{B}|k_1\rangle_{C}- |i_2\rangle_{A}|j_2\rangle_{B}$ $|k_2\rangle_{C}$, where  $i_1\neq i_2$, $j_1\neq j_2$ and $k_1\neq k_2$;
		\item [\rm(ii)]\emph{the W-like states:}  $|i\rangle_{A}|j\rangle_{B}|k\rangle_{C}+\omega_{3}^{1}|j\rangle_{A}|k\rangle_{B}|i\rangle_{C}+\omega_{3}^{2}|k\rangle_{A}|i\rangle_{B}|j\rangle_{C}$, where $i, j$, and $k$ are not all the same, i.e., $(i,j,k)$ is not of the form $(a,a,a)$; and
	\item [\rm(iii)]\emph{the stopper state:}\\$|S\rangle= \left(\sum\limits_{i\in \mathbb Z_{d_1}}|i\rangle_{A}\right)\left(\sum\limits_{j\in \mathbb Z_{d_2}}|j\rangle_{B}\right)\left(\sum\limits_{k\in \mathbb Z_{d_3}}|k\rangle_{C}\right)$.
\end{enumerate}

 For a GHZ-like state,  $|\Phi\rangle=|i_1\rangle_{A}|j_1\rangle_{B}|k_1\rangle_{C}- |i_2\rangle_{A}|j_2\rangle_{B}$ $|k_2\rangle_{C}$,  we refer to $(i_1, j_{1}k_1)$ and $(i_2, j_{2}k_2)$ as two cells of $|\Phi\rangle$ with respect to the bipartition $A|BC$, where $i_1$ and $i_2$ are the row indexes, respectively.  For a $W$-like state,  $|\Phi\rangle=|i\rangle_{A}$ $|j\rangle_{B}|k\rangle_{C}+\omega_{3}^{1}|j\rangle_{A}|k\rangle_{B}|i\rangle_{C}+\omega_{3}^{2}|k\rangle_{A}|i\rangle_{B}|j\rangle_{C}$, we refer to $(i, jk),$  $(j, ki)$, and $(k,ij)$ as three cells of $|\Phi\rangle$ with respect to the bipartition $A|BC$,  where   $i,j$, and $k$ are the row indexes, respectively.   Building on the orthogonality of the post-measurement states, we present the following observations that may be significant in demonstrating that the joint party $BC$ can only start with trivial measurements ($CA$ and $AB$ can be dealt with similarly).

\begin{observation}\label{ob1}
Let $|\Phi\rangle$ and $|\Psi\rangle$ be orthogonal GHZ-like states or $W$-like states. If only one pair of cells  of $|\Phi\rangle$ and $|\Psi\rangle$ with respect to the bipartition $A|BC$ have the same row index, denoted as $(i, jk)$ and $(i, j'k')$, then applying $|\Phi\rangle$ and $|\Psi\rangle$ to Eq. \eqref{eq:OR} results in:
   $$m_{jk,j'k'}=0.$$
\end{observation}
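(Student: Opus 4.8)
The plan is to expand both states in the computational basis adapted to the bipartition $A|BC$ and to substitute them directly into the orthogonality relation Eq.~\eqref{eq:OR}, letting the inner product on the $A$ subsystem collapse the resulting double sum to a single surviving term.

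First I would write each state as a sum over its cells. Set $|\Phi\rangle=\sum_a c_a|i_a\rangle_A|j_ak_a\rangle_{BC}$ and $|\Psi\rangle=\sum_b c'_b|i'_b\rangle_A|j'_bk'_b\rangle_{BC}$, where the index $a$ (resp.\ $b$) runs over the two cells of a GHZ-like state or the three cells of a $W$-like state, and every coefficient $c_a,c'_b$ is nonzero, being either $\pm1$ or a power of $\omega_3$. Substituting into $\langle\Phi|\mathbb{I}_A\otimes E_{BC}|\Psi\rangle=0$ gives
\begin{equation}\label{eq:ob1double}
\sum_{a,b}\overline{c_a}\,c'_b\,\langle i_a|i'_b\rangle_A\,\langle j_ak_a|E_{BC}|j'_bk'_b\rangle=0.
\end{equation}

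The decisive step is that $\langle i_a|i'_b\rangle_A=\delta_{i_a,i'_b}$, so only those pairs of cells whose row indices coincide can contribute to \eqref{eq:ob1double}. By hypothesis there is exactly one such pair, namely the cell $(i,jk)$ of $|\Phi\rangle$ and the cell $(i,j'k')$ of $|\Psi\rangle$; hence the entire double sum reduces to the single term $\overline{c}\,c'\,\langle jk|E_{BC}|j'k'\rangle=\overline{c}\,c'\,m_{jk,j'k'}$, with $\overline{c}\,c'\neq0$. Dividing through by the nonzero factor $\overline{c}\,c'$ yields $m_{jk,j'k'}=0$, as claimed.

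There is essentially no deep obstacle here beyond careful bookkeeping: the only points to verify are that the assumption ``only one pair of cells have the same row index'' really does annihilate every cross-term (including any accidental coincidences among the remaining row indices when both states are $W$-like), and that the coefficients never vanish, so that the lone surviving matrix element is forced to be zero rather than merely offset against another.
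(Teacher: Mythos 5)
Your proposal is correct and matches the paper's own justification: the paper treats Observation~\ref{ob1} as following directly from expanding Eq.~\eqref{eq:OR} in the cell decomposition, where the inner product on subsystem $A$ kills every cross-term except the unique pair with matching row index, leaving a single nonzero multiple of $m_{jk,j'k'}$ (the paper merely illustrates this with the example of $|\phi_{20}\rangle$ and $|\phi_{02}\rangle$ rather than writing the general double sum). Your explicit bookkeeping of the coefficients $\pm 1$ and $\omega_3^t$ being nonzero is exactly the point that makes the conclusion valid.
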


For example, consider two orthogonal GHZ-like states $|\phi_{20}\rangle=|2\rangle_{A}|2\rangle_{B}|0\rangle_{C}-|1\rangle_{A}|0\rangle_{B}|2\rangle_{C}$ and $|\phi_{02}\rangle=|2\rangle_{A}|0\rangle_{B}$ $|2\rangle_{C}-|0\rangle_{A}|2\rangle_{B}|1\rangle_{C}$. In Fig. \ref{fig:333A}, we find that only two cells $(2,20)$ and $(2,02)$ have the unique same row index ``2'' in the bipartition $A|BC$. Applying Obervation \ref{ob1} to $|\phi_{20}\rangle$ and $|\phi_{02}\rangle$, we have
the equation, that is, $(_{A}\langle2|_{BC}\langle20|-_{A}\langle1|
_{BC}\langle02|)\mathbb{I}_A\otimes M_{BC}^{\dagger}M_{BC}(|2\rangle_{A}|02\rangle_{BC}-|0\rangle_{A}|21\rangle_{BC})=0$. Hence, we get that $m_{20,02}=0$. Since $E=E^{\dag}$, we have $m_{02,20}=m_{20,02}=0$.

\begin{observation}\label{ob2}
	
Let $|\Phi\rangle$ and $|\Psi\rangle$ be orthogonal  GHZ-like states or $W$-like states. If more than one pair of cells  of $|\Phi\rangle$ and $|\Psi\rangle$ with respect to the bipartition $A|BC$ have the same row index, denoted as $(i, jk)$ and $(i, j'k')$, and as $(i_x, j_xk_x)$ and $(i_x, j'_xk'_x)$ with $x\in \mathcal{X}$ for some index set $\mathcal{X}$, then applying $|\Phi\rangle$ and $|\Psi\rangle$ to Eq. \eqref{eq:OR} results in:
$$a m_{jk,j'k'}+\sum_{x\in\mathcal{X}} a_xm_{j_xk_x,j_x'k_x'}  =0,$$
where $|a|=|a_x|=1.$ If we have known  $m_{j_xk_x,j_x'k_x'}=0$ for $x\in \mathcal{X}$, then we have $$m_{jk,j'k'}=0.$$

\end{observation}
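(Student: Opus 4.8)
The plan is to write out the orthogonality relation \eqref{eq:OR} for the single pair $|\Phi\rangle,|\Psi\rangle$ entirely in the cell language of the bipartition $A|BC$, so that the entries $m_{\cdot,\cdot}$ of $E_{BC}$ emerge as the coefficients of the expansion. First I would expand each state over its cells,
$$|\Phi\rangle=\sum_{p}c_{p}\,|r_{p}\rangle_{A}|\beta_{p}\rangle_{BC},\qquad |\Psi\rangle=\sum_{q}d_{q}\,|r'_{q}\rangle_{A}|\beta'_{q}\rangle_{BC},$$
where $r_{p},r'_{q}\in\mathbb{Z}_{d_{1}}$ are the row indices and $\beta_{p},\beta'_{q}\in\mathbb{Z}_{d_{2}}\times\mathbb{Z}_{d_{3}}$ are the $BC$-labels of the cells. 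The one fact to record at the outset is that every coefficient is a root of unity: for a GHZ-like state each $c_{p}\in\{+1,-1\}$, and for a $W$-like state each $c_{p}\in\{1,\omega_{3},\omega_{3}^{2}\}$, so that $|c_{p}|=|d_{q}|=1$.

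Next I would substitute these expansions into $\langle\Phi|\,\mathbb{I}_{A}\otimes E_{BC}\,|\Psi\rangle$ and use $\langle r_{p}|r'_{q}\rangle_{A}=\delta_{r_{p},r'_{q}}$ to discard the $A$-factor, giving
$$\langle\Phi|\,\mathbb{I}_{A}\otimes E_{BC}\,|\Psi\rangle=\sum_{p,q:\,r_{p}=r'_{q}}\overline{c_{p}}\,d_{q}\,m_{\beta_{p},\beta'_{q}}.$$
Thus the surviving terms are in one-to-one correspondence with the pairs of cells, one taken from $|\Phi\rangle$ and one from $|\Psi\rangle$, that share a common row index, which is exactly the data enumerated in the hypothesis. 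Writing the coefficient of the distinguished pair $(i,jk),(i,j'k')$ as $a:=\overline{c}\,d$ and those of the remaining pairs $(i_{x},j_{x}k_{x}),(i_{x},j'_{x}k'_{x})$ as $a_{x}$, and invoking \eqref{eq:OR} to set the whole sum to zero, I obtain
$$a\,m_{jk,j'k'}+\sum_{x\in\mathcal{X}}a_{x}\,m_{j_{x}k_{x},j'_{x}k'_{x}}=0,$$
with $|a|=|a_{x}|=1$ by the previous paragraph. This is precisely Observation~\ref{ob1} in the degenerate case $\mathcal{X}=\varnothing$, so the present statement is its direct generalization.

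For the second assertion I would substitute the assumed vanishing $m_{j_{x}k_{x},j'_{x}k'_{x}}=0$ for every $x\in\mathcal{X}$ into the displayed relation; the entire sum over $\mathcal{X}$ drops out and leaves $a\,m_{jk,j'k'}=0$, whence $m_{jk,j'k'}=0$ because $|a|=1\neq 0$. I do not anticipate a genuine obstacle, since the content is just the bookkeeping of which cells collide in the $A$-register together with the tracking of the unit-modulus phases. The one point deserving care is that distinct surviving pairs $(p,q)$ must produce distinct entries $m_{\beta_{p},\beta'_{q}}$: were two pairs to collide onto the same entry, its coefficient would be a sum of two phases and need not have modulus one, breaking the claimed $|a|=|a_{x}|=1$. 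This merging cannot occur because within each of the two states the cells carry pairwise distinct $BC$-labels (for a GHZ-like state since $j_{1}\neq j_{2}$, and for a $W$-like state since $(i,j,k)\neq(a,a,a)$ forces $jk,ki,ij$ to be pairwise distinct), so the map $(p,q)\mapsto(\beta_{p},\beta'_{q})$ is injective and each listed pair contributes its own entry with a single unit-modulus coefficient.
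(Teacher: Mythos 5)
Your proposal is correct and follows essentially the same route the paper takes (the paper justifies this observation only by direct expansion of Eq.~\eqref{eq:OR} and a worked example with $|\phi_{20}\rangle$ and $|\phi_{12}\rangle$): you expand each state over its cells, contract the $A$-register, and read off one unit-modulus term per pair of cells sharing a row index. Your closing remark on the injectivity of $(p,q)\mapsto(\beta_p,\beta'_q)$ — needed so that no two phases merge into a single coefficient — is a point the paper leaves implicit, and you verify it correctly from $j_1\neq j_2$, $k_1\neq k_2$ for GHZ-like states and from $(i,j,k)$ not all equal for $W$-like states.
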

The cardinality of the index set $\mathcal{X}$ is either $1$ or $2$ in this paper.
For example, given $|\phi_{20}\rangle$ and $|\phi_{12}\rangle=|2\rangle_{A}|1\rangle_{B}$ $|2\rangle_{C}-|1\rangle_{A}|2\rangle_{B}|0\rangle_{C}$, we cannot use Obervation \ref{ob1} directly. However, we can obtain the equation by Observation \ref{ob2}, i.e., $(_{A}\langle2|_{BC}\langle20|-_{A}\langle1|_{BC}\langle02|)\mathbb{I}_A\otimes M_{BC}^{\dagger}M_{BC}$ $(|2\rangle_{A}|12\rangle_{BC}-|1\rangle_{A}|20\rangle_{BC})=0$. It implies that $m_{20,12}+m_{02,20}=0$. As $m_{02,20}=0$ can be yielded from Observation \ref{ob1}; therefore $m_{20,12}=0$.

\begin{observation}\label{ob3}
	Suppose that all off-diagonal entries in the matrix $E_{BC}$ are zeros.
Applying  the stopper state $|S\rangle$ and a GHZ-like state $|\Phi\rangle
=|a\rangle_{A}|i\rangle_{B}|j\rangle_{C}-|b\rangle_{A}|k\rangle_{B}|l\rangle_{C}$ to Eq. \eqref{eq:OR} yields
$$m_{ij,ij}= m_{kl,kl}. $$
\end{observation}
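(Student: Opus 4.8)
The plan is to substitute the two states directly into the orthogonality relation Eq.~\eqref{eq:OR}, taking the bra to be the stopper $|S\rangle$ and the ket to be the GHZ-like state $|\Phi\rangle=|a\rangle_{A}|i\rangle_{B}|j\rangle_{C}-|b\rangle_{A}|k\rangle_{B}|l\rangle_{C}$, and then exploiting the fact that $|S\rangle$ is a \emph{uniform} superposition over the full computational basis of each party. The whole argument is a short index computation, so the real work is just bookkeeping.

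First I would compute the action of $\mathbb{I}_A\otimes E_{BC}$ on each cell of $|\Phi\rangle$. Writing $E_{BC}|\beta\gamma\rangle_{BC}=\sum_{\beta'\gamma'}m_{\beta'\gamma',\beta\gamma}|\beta'\gamma'\rangle_{BC}$, I get
$$\mathbb{I}_A\otimes E_{BC}|\Phi\rangle=|a\rangle_{A}\sum_{\beta'\gamma'}m_{\beta'\gamma',ij}|\beta'\gamma'\rangle_{BC}-|b\rangle_{A}\sum_{\beta'\gamma'}m_{\beta'\gamma',kl}|\beta'\gamma'\rangle_{BC}.$$
Next I would contract with $\langle S|=\sum_{\alpha,\beta,\gamma}\langle\alpha|_{A}\langle\beta|_{B}\langle\gamma|_{C}$. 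The point to emphasize is that the $A$-register sum in $|S\rangle$ ranges over all of $\mathbb{Z}_{d_1}$, so it contains both $\langle a|$ and $\langle b|$; hence \emph{both} cells of $|\Phi\rangle$ survive the contraction rather than just one. The $A$-overlaps collapse $\alpha$ to $a$ (resp.\ $b$), and the remaining $BC$-sum $\sum_{\beta,\gamma}\langle\beta\gamma|E_{BC}|ij\rangle=\sum_{\beta\gamma}m_{\beta\gamma,ij}$ is exactly the sum of all entries in the column of $E_{BC}$ indexed by $ij$. Thus Eq.~\eqref{eq:OR} reduces to $\sum_{\beta\gamma}m_{\beta\gamma,ij}-\sum_{\beta\gamma}m_{\beta\gamma,kl}=0$, i.e.\ the two column sums coincide (the minus sign of the GHZ-like state is what turns this into an equality of the two terms rather than a vanishing sum).

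Finally I would invoke the hypothesis that every off-diagonal entry of $E_{BC}$ is zero. Under this assumption each column sum collapses to its single diagonal term, so $\sum_{\beta\gamma}m_{\beta\gamma,ij}=m_{ij,ij}$ and $\sum_{\beta\gamma}m_{\beta\gamma,kl}=m_{kl,kl}$, which gives $m_{ij,ij}=m_{kl,kl}$ as claimed.

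I do not expect any serious technical obstacle: once the contraction is set up, the conclusion is immediate. The only conceptually important feature—as opposed to the earlier Observations \ref{ob1} and \ref{ob2}, where two GHZ-like/$W$-like states share one or more row indices—is that $|S\rangle$ has \emph{full} support on party $A$, so the cells $(a,ij)$ and $(b,kl)$ contribute simultaneously; this simultaneous contribution is precisely what links the two diagonal entries $m_{ij,ij}$ and $m_{kl,kl}$. The slightly delicate bookkeeping step is therefore just verifying that no other basis vectors of $|S\rangle$ contribute, which follows from orthonormality of the computational basis.
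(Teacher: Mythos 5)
Your computation is correct and matches what the paper intends: Observation \ref{ob3} is stated without a separate proof, but the paper's explicit derivation of Observation \ref{ob4} (via Eq.~\eqref{eq:identity1}) is exactly this same direct substitution of $|S\rangle$ and the state into Eq.~\eqref{eq:OR} under the all-off-diagonal-zero hypothesis, and your version for the GHZ-like case is the straightforward analogue. Nothing is missing.
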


\begin{observation}\label{ob4}
Suppose that all off-diagonal entries in the matrix $E_{BC}$ are zeros.
Applying  the stopper state $|S\rangle$ and the $W$-like state $|\Psi\rangle
=|i\rangle_{A}|j\rangle_{B}|k\rangle_{C}+\omega_{3}^{1}|j\rangle_{A}|k\rangle_{B}|i\rangle_{C}+\omega_{3}^{2}|k\rangle_{A}|i\rangle_{B}|j\rangle_{C}$ to Eq. \eqref{eq:OR} yields
$$m_{ij,ij}= m_{jk,jk}=m_{ki,ki}.
$$
\end{observation}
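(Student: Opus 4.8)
The plan is to feed the stopper state $|S\rangle$ and the $W$-like state $|\Psi\rangle$ into the orthogonality relation \eqref{eq:OR}, namely $\langle S|\,\mathbb{I}_A\otimes E_{BC}\,|\Psi\rangle=0$, and then read off the diagonal entries of $E_{BC}$. First I would split both states along the $A$-subsystem. Writing $|S_{BC}\rangle=\sum_{j,k}|jk\rangle_{BC}$, the stopper reads $|S\rangle=\sum_{a}|a\rangle_A\otimes|S_{BC}\rangle$, while the $W$-like state groups as
$$|\Psi\rangle=|i\rangle_A\otimes|jk\rangle_{BC}+\omega_3\,|j\rangle_A\otimes|ki\rangle_{BC}+\omega_3^2\,|k\rangle_A\otimes|ij\rangle_{BC}.$$
Because $i,j,k$ are legitimate indices in every subsystem and the stopper carries coefficient $1$ on each basis vector of $A$, the $A$-overlap $\sum_a\langle a|\cdot\rangle_A$ equals $1$ for each of the three terms, collapsing each contribution to its $BC$-part.

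Next I would invoke the standing hypothesis that all off-diagonal entries of $E_{BC}$ vanish. This reduces each surviving overlap $\langle S_{BC}|E_{BC}|pq\rangle=\sum_{r,s}m_{rs,pq}$ to the single diagonal term $m_{pq,pq}$. Collecting the three pieces coming from $|\Psi\rangle$ then turns \eqref{eq:OR} into the single scalar identity
$$m_{jk,jk}+\omega_3\,m_{ki,ki}+\omega_3^2\,m_{ij,ij}=0.$$

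The crux is to deduce the three equalities from this one complex equation. Since $E_{BC}$ is a POVM element it is Hermitian, so its diagonal entries $m_{jk,jk},\,m_{ki,ki},\,m_{ij,ij}$ are real. Substituting $\omega_3=-\tfrac12+\tfrac{\sqrt3}{2}\sqrt{-1}$ and separating real and imaginary parts, the imaginary part forces $m_{ki,ki}=m_{ij,ij}$, and the real part then gives $m_{jk,jk}=m_{ki,ki}$, whence $m_{ij,ij}=m_{jk,jk}=m_{ki,ki}$. The only genuine subtlety is this final step: a single linear relation with cube-root-of-unity coefficients pins down three unknowns precisely because the reality of the diagonal entries supplies the second (imaginary-part) constraint, so it is the Hermiticity of the measurement element, rather than the algebra of $\omega_3$ alone, that closes the argument.
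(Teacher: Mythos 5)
Your argument is correct and matches the paper's: both reduce Eq.~\eqref{eq:OR} to the single relation $m_{jk,jk}+\omega_{3}m_{ki,ki}+\omega_{3}^{2}m_{ij,ij}=0$ and then exploit the reality of the diagonal entries (Hermiticity of $E_{BC}$) to extract a second constraint. The paper does this by conjugating the equation and solving the resulting linear system, which is exactly your real/imaginary-part separation in different clothing.
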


In fact, under the condition that all off-diagonal entries in the matrix $E_{BC}$ are zeros, Eq. \eqref{eq:OR} is just
\begin{equation}\label{eq:identity1}
m_{jk,jk}+\omega_{3}^{1}m_{ki,ki}+\omega_{3}^{2}m_{ij,ij}=0,
\end{equation}
where $m_{jk,jk},m_{ki,ki}$, and $m_{ij,ij}$ are all real numbers. Taking complex conjugation to both sides of Eq. \eqref{eq:identity1}, we have
\begin{equation}\label{eq:identity2}
	m_{jk,jk}+\omega_{3}^{2}m_{ki,ki}+\omega_{3}^{1}m_{ij,ij}=0.
\end{equation}
Solving the two linear equations, Eqs.~\eqref{eq:identity1} and \eqref{eq:identity2}, one has
$(m_{jk,jk},m_{ki,ki},m_{ij,ij})=(r,r,r)$
for some real number $r$.

\section{The strongest nonlocal sets with minimum cardinality in $\mathbb{C}^{d}\otimes \mathbb{C}^{d}\otimes \mathbb{C}^{d}$}

In this section, we present an orthogonal set of size $10$ with the strongest nonlocality in $\mathbb{C}^{3}\otimes \mathbb{C}^{3}\otimes \mathbb{C}^{3}$, the corresponding Rubik's cube of which is given by Fig. \ref{fig:333}. Subsequently, we propose the strongest nonlocal set with minimum cardinality in $\mathbb{C}^{d}\otimes \mathbb{C}^{d}\otimes \mathbb{C}^{d}$, where $d\geq 3$.

\begin{figure}[ht]
	\centering
	\includegraphics[width=4.5cm]{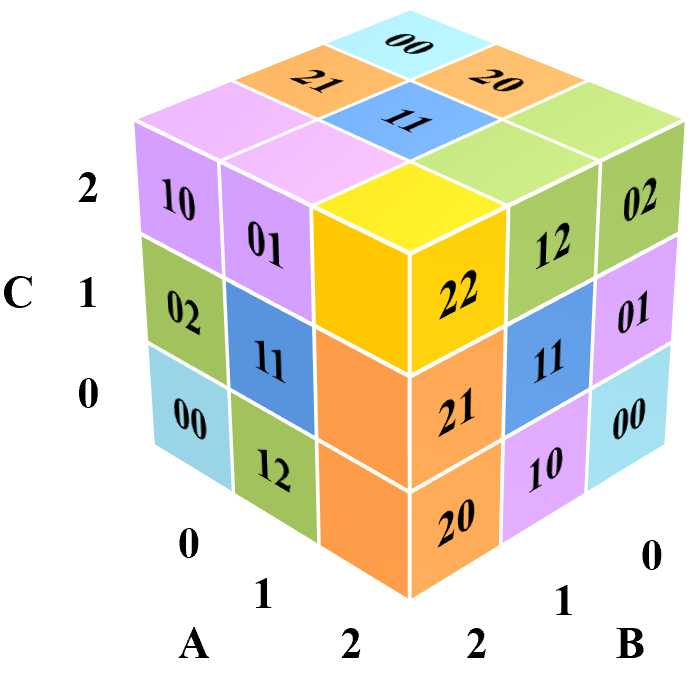}
	\caption{The Rubik's cube corresponding to the strongest nonlocal set with minimum cardinality given by Eq. \eqref{eq:333} in $\mathbb{C}^{3}\otimes\mathbb{C}^{3}\otimes \mathbb{C}^{3}$. Every cube has an index $(i,j,k)$, where $i,j$, and $k$ are the indexes in each subsystem $A, B$, and $C$. For example, the GHZ-like state $|\phi_{22}\rangle=|2\rangle_{A}|2\rangle_{B}|2\rangle_{C}-|1\rangle_{A}|1\rangle_{B}|1\rangle_{C}$ corresponds to two yellow cubes $(1,1,1)$ and $(2,2,2)$ with label ``22''. The $W$-like state $|\phi_{00}\rangle=|2\rangle_{A}|0\rangle_{B}|0\rangle_{C}+\omega_{3}^{1}|0\rangle_{A}$ $|0\rangle_{B}|2\rangle_{C}+\omega_{3}^{2}|0\rangle_{A}|2\rangle_{B}|0\rangle_{C}$ corresponds to three blue cubes $(2,0,0),(0,0,2)$, and $(0,2,0)$ with label ``00''.} \label{fig:333}
\end{figure}

\begin{table*}[htbp]
	\newcommand{\tabincell}[2]{\begin{tabular}{@{}#1@{}}#2\end{tabular}}
	\centering
	\caption{\label{ta1}The off-diagonal entries in the matrix $E_{BC}=(m_{ij,kl})_{ij\neq kl\in \mathbb Z_{3}\times \mathbb Z_{3}}$ by Observations~\ref{ob1} and \ref{ob2}.}
	\begin{tabular}{ccccccc}
		\toprule
		\hline
		\hline
		\specialrule{0em}{1.5pt}{1.5pt}
		Observations~&~Pair of states&Key entries~&Pair of states&Remaining entries~&Pair of states&Remaining entries\\
		\specialrule{0em}{1.5pt}{1.5pt}
		\midrule
		\hline
		\specialrule{0em}{1.5pt}{1.5pt}
		\tabincell{c}{Observation~\ref{ob1}}
		&\tabincell{c}{$|\phi_{02}\rangle, |\phi_{11}\rangle$\\$|\phi_{02}\rangle, |\phi_{12}\rangle$\\$|\phi_{02}\rangle, |\phi_{20}\rangle$\\$|\phi_{02}\rangle, |\phi_{22}\rangle$\\$|\phi_{11}\rangle, |\phi_{21}\rangle$\\$|\phi_{12}\rangle, |\phi_{21}\rangle$\\$|\phi_{20}\rangle, |\phi_{21}\rangle$\\$|\phi_{21}\rangle, |\phi_{22}\rangle$}&
		\tabincell{c}{$m_{02,11}=m_{11,02}=0$~~\\ $m_{02,12}=m_{12,02}=0$~~\\ $m_{02,20}=m_{20,02}=0$~~\\ $m_{02,22}=m_{22,02}=0$~~\\$m_{11,21}=m_{21,11}=0$~~\\ $m_{12,21}=m_{21,12}=0$~~\\ $m_{20,21}=m_{21,20}=0$\\$m_{21,22}=m_{22,21}=0$~~}
		&\tabincell{c}{$|\phi_{00}\rangle, |\phi_{11}\rangle$\\$|\phi_{00}\rangle, |\phi_{01}\rangle$\\$|\phi_{00}\rangle, |\phi_{12}\rangle$\\$|\phi_{00}\rangle, |\phi_{20}\rangle$\\$|\phi_{00}\rangle, |\phi_{22}\rangle$\\$|\phi_{11}\rangle, |\phi_{10}\rangle$}&
		\tabincell{c}{$m_{00,11}=m_{11,00}=0$~~\\ $m_{00,01}=m_{01,00}=0$~~\\ $m_{00,12}=m_{12,00}=0$~~\\ $m_{00,20}=m_{20,00}=0$~~\\ $m_{00,22}=m_{22,00}=0$~~\\$m_{11,10}=m_{10,11}=0$~~}&
		\tabincell{c}{$|\phi_{01}\rangle, |\phi_{02}\rangle$\\$|\phi_{01}\rangle, |\phi_{10}\rangle$\\$|\phi_{01}\rangle, |\phi_{21}\rangle$\\$|\phi_{10}\rangle, |\phi_{12}\rangle$\\$|\phi_{10}\rangle, |\phi_{20}\rangle$\\$|\phi_{10}\rangle, |\phi_{22}\rangle$}&
		\tabincell{c}{$m_{01,02}=m_{02,01}=0$~~\\$m_{01,10}=m_{10,01}=0$~~\\ $m_{01,21}=m_{21,01}=0$~~\\ $m_{10,12}=m_{12,10}=0$~~\\$m_{10,20}=m_{20,10}=0$~~\\ $m_{10,22}=m_{22,10}=0$~~}\\
		\specialrule{0em}{3.5pt}{3.5pt}
		\tabincell{c}{Observation~\ref{ob2}}
		&\tabincell{c}{$|\phi_{02}\rangle, |\phi_{21}\rangle$\\$|\phi_{12}\rangle, |\phi_{20}\rangle$\\ $|\phi_{20}\rangle, |\phi_{22}\rangle$\\$|\phi_{11}\rangle, |\phi_{12}\rangle$\\$|\phi_{11}\rangle, |\phi_{20}\rangle$\\$|\phi_{11}\rangle, |\phi_{22}\rangle$\\$|\phi_{12}\rangle, |\phi_{22}\rangle$}
		&\tabincell{c}{$m_{02,21}=m_{21,02}=0$~~\\ $m_{12,20}=m_{20,12}=0$~~\\ $m_{20,22}=m_{22,20}=0$~~\\ $m_{11,12}=m_{12,11}=0$~~\\$m_{11,20}=m_{20,11}=0$~~\\$m_{11,22}=m_{22,11}=0$~~\\$m_{12,22}=m_{22,12}=0$~~}&
		\tabincell{c}{$|\phi_{01}\rangle, |\phi_{12}\rangle$\\$|\phi_{01}\rangle, |\phi_{20}\rangle$\\ $|\phi_{01}\rangle, |\phi_{22}\rangle$\\$|\phi_{10}\rangle, |\phi_{02}\rangle$\\$|\phi_{10}\rangle, |\phi_{21}\rangle$}
		&\tabincell{c}{$m_{01,12}=m_{12,01}=0$~~\\$m_{01,20}=m_{20,01}=0$~~\\ $m_{01,22}=m_{22,01}=0$~~\\$m_{10,02}=m_{02,10}=0$~~\\$m_{10,21}=m_{21,10}=0$~~}
		&\tabincell{c}{$|\phi_{00}\rangle, |\phi_{02}\rangle$\\$|\phi_{00}\rangle, |\phi_{10}\rangle$\\$|\phi_{00}\rangle, |\phi_{21}\rangle$\\$|\phi_{11}\rangle, |\phi_{01}\rangle$}
		&\tabincell{c}{$m_{00,02}=m_{02,00}=0$~~\\$m_{00,10}=m_{10,00}=0$~~\\ $m_{00,21}=m_{21,00}=0$~~\\ $m_{11,01}=m_{01,11}=0$~~}\\
		\bottomrule
		\specialrule{0em}{1.5pt}{1.5pt}
		\hline
		\hline
	\end{tabular}
\end{table*}

\begin{lemma}\label{le:333}    The set $\cup_{i,j\in\mathbb{Z}_3}\{|\phi_{ij}\rangle\}\bigcup \{|S_1\rangle\}$ of size $10$ given by Eq. \eqref{eq:333} is strongest nonlocal in $\mathbb{C}^{3}\otimes \mathbb{C}^{3}\otimes \mathbb{C}^{3}$:
\begin{equation}\label{eq:333}
\begin{aligned}
|\phi_{22}\rangle=&|2\rangle_{A}|2\rangle_{B}|2\rangle_{C}-|1\rangle_{A}|1\rangle_{B}|1\rangle_{C},\\
|\phi_{20}\rangle=&|2\rangle_{A}|2\rangle_{B}|0\rangle_{C}-|1\rangle_{A}|0\rangle_{B}|2\rangle_{C},\\
|\phi_{21}\rangle=&|2\rangle_{A}|2\rangle_{B}|1\rangle_{C}-|0\rangle_{A}|1\rangle_{B}|2\rangle_{C},\\
|\phi_{02}\rangle=&|2\rangle_{A}|0\rangle_{B}|2\rangle_{C}-|0\rangle_{A}|2\rangle_{B}|1\rangle_{C},\\
|\phi_{12}\rangle=&|2\rangle_{A}|1\rangle_{B}|2\rangle_{C}-|1\rangle_{A}|2\rangle_{B}|0\rangle_{C},\\
|\phi_{10}\rangle=&|2\rangle_{A}|1\rangle_{B}|0\rangle_{C}-|0\rangle_{A}|2\rangle_{B}|2\rangle_{C},\\
|\phi_{01}\rangle=&|2\rangle_{A}|0\rangle_{B}|1\rangle_{C}-|1\rangle_{A}|2\rangle_{B}|2\rangle_{C},\\
|\phi_{00}\rangle=&|2\rangle_{A}|0\rangle_{B}|0\rangle_{C}+\omega_{3}^{1}|0\rangle_{A}|0\rangle_{B}|2\rangle_{C}+\omega_{3}^{2}|0\rangle_{A}|2\rangle_{B}|0\rangle_{C},\\
|\phi_{11}\rangle=&|2\rangle_{A}|1\rangle_{B}|1\rangle_{C}+\omega_{3}^{1}|1\rangle_{A}|1\rangle_{B}|2\rangle_{C}+\omega_{3}^{2}|1\rangle_{A}|2\rangle_{B}|1\rangle_{C},\\
|S_1\rangle=&|0+1+2\rangle_{A}|0+1+2\rangle_{B}|0+1+2\rangle_{C}.\\
\end{aligned}
\end{equation}
\end{lemma}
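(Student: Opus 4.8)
The plan is to verify the defining property of the strongest nonlocality head-on: for each of the three bipartitions $A|BC$, $B|CA$, $C|AB$, I must show that any POVM element $E$ of an orthogonality-preserving local measurement on the joint party is forced to be proportional to the identity. I will carry this out in full for the bipartition $A|BC$ and then invoke the cyclic symmetry of the set to dispose of the other two. Throughout, the only input is the orthogonality relation Eq.~\eqref{eq:OR} applied to pairs of states from Eq.~\eqref{eq:333}, processed through Observations~\ref{ob1}--\ref{ob4}.

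Fix a measurement element $E_{BC}=(m_{ij,kl})_{ij,kl\in\mathbb{Z}_3\times\mathbb{Z}_3}$. The argument splits into two phases. In the first phase I kill all off-diagonal entries, i.e.\ all $36$ Hermitian conjugate pairs. I proceed in two rounds exactly as organized in Table~\ref{ta1}: first I apply Observation~\ref{ob1} to every pair of GHZ-like/$W$-like states whose cells in $A|BC$ share a \emph{unique} common row index, reading off that the associated entry vanishes (the ``key'' entries); then I apply Observation~\ref{ob2} to the pairs sharing two common row indices, where the resulting linear relation involves exactly one previously-killed entry, forcing the remaining one to vanish as well. Hermiticity $m_{jk,j'k'}=\overline{m_{j'k',jk}}$ supplies the mirror entries. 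Since the $36$ pairs listed in Table~\ref{ta1} exhaust all off-diagonal positions, this shows $E_{BC}$ is diagonal.

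In the second phase I show the nine diagonal entries coincide. With $E_{BC}$ now diagonal, Observations~\ref{ob3} and \ref{ob4} become available: pairing the stopper $|S_1\rangle$ with each GHZ-like state equates two diagonal entries, while pairing it with each $W$-like state equates a triple. Reading these relations as a graph on the nine positions $(j,k)$, I check that the edges contributed by $|\phi_{22}\rangle,|\phi_{20}\rangle,|\phi_{21}\rangle,|\phi_{02}\rangle,|\phi_{12}\rangle,|\phi_{10}\rangle,|\phi_{01}\rangle$ together with the two triples from $|\phi_{00}\rangle$ and $|\phi_{11}\rangle$ connect all nine vertices; hence every $m_{jk,jk}$ equals a common real number $r$, so $E_{BC}=r\,\mathbb{I}_{BC}$ is trivial.

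Finally, the set enjoys the cyclic symmetry $A\to B\to C\to A$ visible in the Rubik's cube of Fig.~\ref{fig:333}: the seven GHZ-like states and the stopper are permuted among themselves (up to a global phase), and the two $W$-like states, being built from the symmetric pattern $(i,jk),(j,ki),(k,ij)$, are eigenstates of the cyclic shift. Since this shift carries the bipartition $A|BC$ to $B|CA$ and then to $C|AB$, the identical two-phase argument yields $E_{CA}\propto\mathbb{I}_{CA}$ and $E_{AB}\propto\mathbb{I}_{AB}$ after relabeling. I expect the main obstacle to be the bookkeeping of the first phase: one must order the applications of Observations~\ref{ob1} and \ref{ob2} so that each use of Observation~\ref{ob2} relies only on entries already proved to vanish, and confirm that the chosen pairs really do reach every off-diagonal entry. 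The connectivity check in the second phase is the other point requiring care, though it is a finite and transparent verification.
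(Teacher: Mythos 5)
Your proposal is correct and follows essentially the same route as the paper's own proof: a first phase eliminating all off-diagonal entries of $E_{BC}$ via Observations~\ref{ob1} and \ref{ob2} in the order of Table~\ref{ta1} (key entries first, then the remaining ones), a second phase equating the diagonal entries via Observations~\ref{ob3} and \ref{ob4} with the stopper state, and an appeal to the cyclic symmetry of the set to handle the bipartitions $B|CA$ and $C|AB$. The only cosmetic difference is that you phrase the diagonal step as a connectivity check on a graph over the nine positions, which the paper carries out by listing the resulting chains of equalities explicitly.
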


\begin{figure}
	\centering
	\includegraphics[width=6cm]{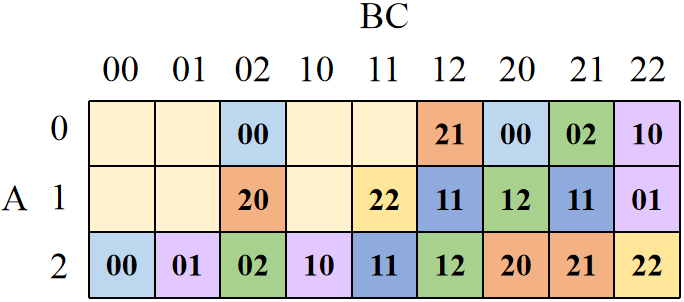}
	\vspace{0.5cm}
	\includegraphics[width=6cm]{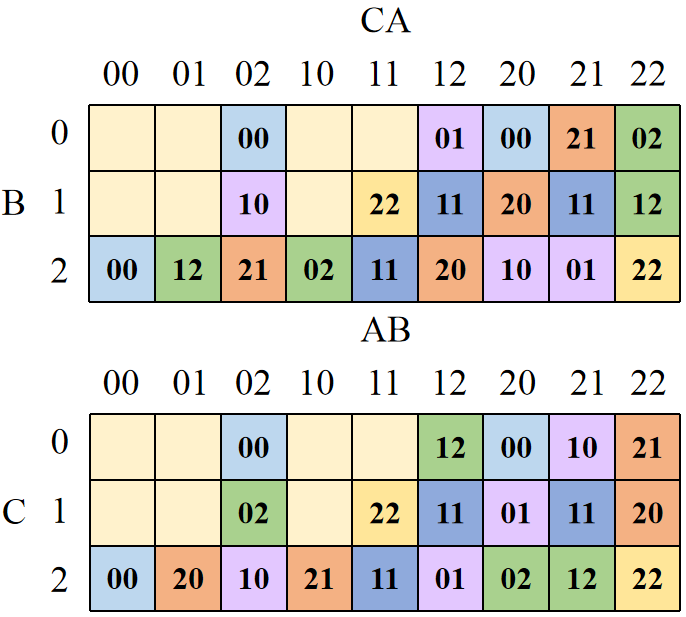}
	\caption{The $3\times 9$ plane structures of the strongest nonlocal set $\cup_{i,j\in\mathbb{Z}_3}\{|\phi_{ij}\rangle\}\bigcup \{|S_1\rangle\}$ correspond to each bipartition $A|BC$, $B|CA$, and $C|AB$. Every cell has an index $(i, jk)$, where $i$ is the row index in the single subsystem and $jk$ is the column index in the joint subsystem. For example, the GHZ-like state $|\phi_{20}\rangle=|2\rangle_{A}|20\rangle_{BC}-|1\rangle_{A}|02\rangle_{BC}$ corresponds to two cells $(2,20)$ and $(1,02)$ with label ``20'' in the bipartition $A|BC$. }  \label{fig:333A}
\end{figure}
\begin{proof}  Denote that $\mathcal S_{1}=\cup_{i,j\in\mathbb{Z}_3}\{|\phi_{ij}\rangle\}$. In Fig. \ref{fig:333A}, the set $\mathcal S_{1}\bigcup \{|S_{1}\rangle\}$ has a similar plane structure in every bipartition $A|BC$, $B|CA$, and $C|AB$ under the cyclic permutation. Thus, we only need to prove that the measurement applied to the joint subsystem $BC$ is trivial.

\begin{figure}[ht]
	\centering
	\includegraphics[width=8cm]{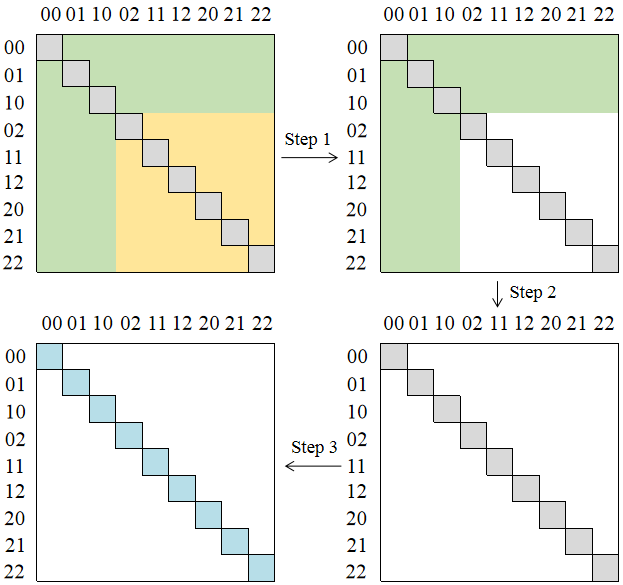}
	\caption{Applying observations to prove that the set $\cup_{i,j\in\mathbb{Z}_3}\{|\phi_{ij}\rangle\}\bigcup \{|S_1\rangle\}$ is the strongest nonlocal set. The yellow (green) entries correspond to the key (remaining) off-diagonal entries, the gray entries represent that all diagonal entries are unknown. Step $1$, we obtain that all key off-diagonal entries are zeros, which correspond to the white entries. Step $2$, all remaining off-diagonal entries are zeros. Step $3$, all diagonal entries are equal by Observations \ref{ob3} and \ref{ob4}, which are shown by the blue entries.}  \label{fig:333proof}
\end{figure}

\noindent{\bf Step $1$} According to Observations \ref{ob1} and \ref{ob2}, we obtain that all key off-diagonal entries are zeros, i.e., $m_{ij,kl}=0$ for $ij\neq kl\in \{02,$ $11,12,20,21,22\}$.

\noindent{\bf Step $2$} Consider $|\phi_{pq}\rangle$ and $|\phi_{st}\rangle$ for $pq\neq st$, $pq\in \{00,$ $01,10\}$, and $st\in \mathbb Z_{3}\times \mathbb Z_{3}$, we get that the remaining off-diagonal entries are zeros, that is, $m_{pq,st}=0$. Thus, all off-diagonal entries in the matrix $E_{BC}$ are zeros, that is, $m_{ij,kl}=m_{kl,ij}=0$ for $ij\neq kl\in \mathbb Z_{3}\times \mathbb Z_{3}$ from Table \ref{ta1}.
\begin{table*}[htbp]
	\newcommand{\tabincell}[2]{\begin{tabular}{@{}#1@{}}#2\end{tabular}}
	\centering
	\caption{\label{ta2}The key off-diagonal entries with respect to  coordinates of  $\mathcal S=\{\hat{d}i,j\hat{d},d^*d^*,\hat{d}\hat{d}\mid i,j\in \mathbb Z_{\hat{d}}\}$ in the matrix $E_{BC}=(m_{ij,kl})_{ij\neq kl\in \mathbb Z_{d}\times \mathbb Z_{d}}$ by Observations~\ref{ob1} and \ref{ob2}.}
	\begin{tabular}{cccccc}
		\toprule
		\hline
		\hline
		\specialrule{0em}{1.5pt}{1.5pt}
		~~~Observations~~~~&~~~~Sets~~~&~~~~Pair of states~~~&~~~Key entries~~~&~~~Value range~~~~&~~~Precondition~~~\\
		\specialrule{0em}{1.5pt}{1.5pt}
		\midrule
		\hline
		\specialrule{0em}{1.5pt}{1.5pt}
\tabincell{c}{Observation~\ref{ob1}}&\tabincell{c}{$\mathcal{A}_{1}$\\$\mathcal{A}_{2}$\\$\mathcal{A}_{1}, \mathcal{A}_{2}$\\$\mathcal{A}_{1}, \mathcal{A}_{5}$\\$\mathcal{A}_{2}, \mathcal{A}_{5}$\\$\mathcal{A}_{0}, \mathcal{A}_{1}$\\$\mathcal{A}_{0}, \mathcal{A}_{2}$}
&~~\tabincell{c}{$|\phi_{\hat{d}i}\rangle, |\phi_{\hat{d}j}\rangle$\\$|\phi_{i\hat{d}}\rangle, |\phi_{j\hat{d}}\rangle$\\$|\phi_{\hat{d}i}\rangle, |\phi_{j\hat{d}}\rangle$\\$|\phi_{\hat{d}i}\rangle, |\phi_{d^*d^*}\rangle$\\$| \phi_{i\hat{d}}\rangle, |\phi_{d^*d^*}\rangle$\\$|\phi_{\hat{d}\hat{d}}\rangle, |\phi_{\hat{d}i}\rangle$\\$|\phi_{\hat{d}\hat{d}}\rangle, |\phi_{i\hat{d}}\rangle$}
&~~\tabincell{c}{$m_{\hat{d}i,\hat{d}j}=m_{\hat{d}j,\hat{d}i}=0$~~\\ $m_{i\hat{d},j\hat{d}}=m_{j\hat{d},i\hat{d}}=0$~~\\ $m_{\hat{d}i,j\hat{d}}=m_{j\hat{d},\hat{d}i}=0$~~\\ $m_{\hat{d}i,d^*d^*}=m_{d^*d^*,\hat{d}i}=0$~~\\ $m_{i\hat{d},d^*d^*}=m_{d^*d^*,i\hat{d}}=0$~~\\ $m_{\hat{d}i,\hat{d}\hat{d}}=m_{\hat{d}\hat{d},\hat{d}i}=0$~~\\ $m_{i\hat{d},\hat{d}\hat{d}}=m_{\hat{d}\hat{d},i\hat{d}}=0$}

&~~\tabincell{c}{$i\neq j$~~\\ $i\neq j$~~\\ $i+j\neq d^*$~~\\ $ i\neq 0$~~\\$i\neq d^*$~~\\ $i\neq 0$~~\\ $i\neq  d^*$}

&~~\tabincell{c}{Null~~\\ Null~~\\ Null~~\\ Null~~\\Null~~\\ Null~~\\ Null }\\
	\specialrule{0em}{3.5pt}{3.5pt}
\tabincell{c}{Observation~\ref{ob2}}&\tabincell{c}{$\mathcal{A}_{1}, \mathcal{A}_{2}$\\$\mathcal{A}_{1}, \mathcal{A}_{5}$\\$\mathcal{A}_{2}, \mathcal{A}_{5}$\\$\mathcal{A}_{0}, \mathcal{A}_{1}$\\$\mathcal{A}_{0}, \mathcal{A}_{2}$\\$\mathcal{A}_{0}, \mathcal{A}_{5}$}
&~~\tabincell{c}{$|\phi_{i\hat{d}}\rangle, |\phi_{\hat{d}(d^*-i)}\rangle$\\$|\phi_{\hat{d}0}\rangle, |\phi_{d^*d^*}\rangle$\\$|\phi_{d^*\hat{d}}\rangle, |\phi_{d^*d^*}\rangle$\\$|\phi_{\hat{d}\hat{d}}\rangle, |\phi_{\hat{d}0}\rangle$\\$|\phi_{\hat{d}\hat{d}}\rangle, |\phi_{d^*\hat{d}}\rangle$\\$|\phi_{\hat{d}\hat{d}}\rangle, |\phi_{d^*d^*}\rangle$}
&~~\tabincell{c}{$m_{\hat{d}(d^*-i),i\hat{d}}=m_{i\hat{d},\hat{d}(d^*-i)}=0$~~\\ $m_{\hat{d}0,d^*d^* }=m_{d^*d^*,\hat{d}0}=0$~~\\ $m_{d^*\hat{d},d^*d^* }=m_{d^*d^*,d^*\hat{d}}=0$~~\\ $m_{\hat{d}0,\hat{d}\hat{d}}=m_{\hat{d}\hat{d},\hat{d}0}=0$~~\\ $m_{ d^* \hat{d},\hat{d}\hat{d}}=m_{\hat{d}\hat{d},d^*\hat{d}}=0$~~\\ $m_{d^*d^*,\hat{d}\hat{d}}=m_{\hat{d}\hat{d},d^*d^*}=0$}
&~~\tabincell{c}{$i \neq \frac{d^*}{2}$~~\\ $ $~~\\ $ $~~\\ $ $~~\\ $ $~~\\ $ $}
&~~\tabincell{c}{$m_{\hat{d}i,i\hat{d}}=0$~~\\  $m_{0\hat{d},d^*\hat{d}}=m_{0\hat{d},\hat{d}d^*}=0$~~\\ $m_{\hat{d}0,d^*\hat{d}}=m_{\hat{d}0,\hat{d}d^*}=0$~~\\ $m_{d^* d^*,0\hat{d}}=0$~~\\ $m_{\hat{d}0,d^* d^*}=0$~~\\ $m_{\hat{d}d^*,d^*d^*}=m_{d^*\hat{d},d^*d^*}=0$}\\
		\bottomrule
		\specialrule{0em}{1.5pt}{1.5pt}
		\hline
		\hline
	\end{tabular}
\end{table*}

\noindent{\bf Step $3$} Applying Observation \ref{ob3} to the stopper state $|S_{1}\rangle$ and $|\phi_{ij}\rangle$ for $ij\in \mathbb Z_3 \times \mathbb Z_3\backslash\{00,11\}$, we get $m_{02,02}=m_{12,12}=m_{20,20}=m_{21,21}$ and $m_{01,01}=m_{10,10}=m_{11,11}$ $=m_{22,22}$. Consider the stopper state $|S_{1}\rangle$ and the $W$-like states $|\phi_{00}\rangle$ and $|\phi_{11}\rangle$ by Observation \ref{ob4}, we have $m_{00,00}=m_{02,02}=m_{20,20}$ and $m_{11,11}= m_{12,12}=m_{21,21}$. In short, all diagonal entries are equal.

The proof that $E_{BC}$ is proportional to an identity matrix is shown in Fig. \ref{fig:333proof}. Therefore, the POVM element $E_{BC}$ is trivial.
This completes the proof.
\end{proof}

The latest result in $\mathbb{C}^{3}\otimes \mathbb{C}^{3}\otimes \mathbb{C}^{3}$ is the strongest nonlocal set of size $11$ given by Li {\it et al}. \cite{Li2023PRA}. Here, we provide the strongest nonlocal set of the smallest size $10$,
which positively answers an open problem raised by Yuan {\it et al}. \cite{Yuan2020} and reaches the lower bound on the strongest nonlocal set in $\mathbb{C}^{3}\otimes \mathbb{C}^{3}\otimes \mathbb{C}^{3}$ \cite{Li2023}.

Now, we put forward the strongest nonlocal set with minimum cardinality in three-qudit systems, the corresponding Rubik's cube of which is shown in Fig. \ref{fig:ddd}.

\begin{theorem} \label{th:ddd}  The set $\cup_{i=0}^{5}\mathcal{A}_{i}\bigcup \{|S_2\rangle\}$ of size $d^2+1$ given by Eq. \eqref{eq:ddd} is strongest nonlocal in $\mathbb{C}^d\otimes \mathbb{C}^d\otimes \mathbb{C}^d$ for $d\geq 3$:

\begin{equation}\label{eq:ddd}
\begin{aligned}
\mathcal{A}_{0}=&\{|\phi_{\hat{d}\hat{d}}\rangle=|\hat{d}\rangle_{A}|\hat{d}\rangle_{B}|\hat{d}\rangle_{C}-|d^*\rangle_{A}|d^*\rangle_{B}|d^*\rangle_{C}\},\\
\mathcal{A}_{1}=&\{|\phi_{\hat{d}i}\rangle=|\hat{d}\rangle_{A}|\hat{d}\rangle_{B}|i\rangle_{C}-|d^*-i\rangle_{A}|i\rangle_{B}|\hat{d}\rangle_{C}\big | i\in \mathbb{Z}_{\hat{d}}\},\\
\mathcal{A}_{2}=&\{|\phi_{i\hat{d}}\rangle=|\hat{d}\rangle_{A}|i\rangle_{B}|\hat{d}\rangle_{C}-|i\rangle_{A}|\hat{d}\rangle_{B}|d^*-i\rangle_{C}\big | i\in \mathbb{Z}_{\hat{d}}\},\\
\mathcal{A}_{3}=&\{|\phi_{(d^*-i)i }\rangle=|\hat{d}\rangle_{A}|d^*-i\rangle_{B}|i\rangle_{C}-|i\rangle_{A}|\hat{d}\rangle_{B}|\hat{d}\rangle_{C}  \mid i\in \mathbb{Z}_{\hat{d}}\},\\
\mathcal{A}_{4}=&\{|\phi_{kl}\rangle=
|\hat{d}\rangle_{A}|k\rangle_{B}|l\rangle_{C}+\omega_{3}^{1}|k\rangle_{A}|l\rangle_{B}|\hat{d}\rangle_{C}  \\
&+\omega_{3}^{2}|l\rangle_{A}|\hat{d}\rangle_{B}|k\rangle_{C}  \mid k,l\in \mathbb Z_{\hat{d}},k+l\geq  d-1  \},\\
\mathcal{A}_{5}=&\{|\phi_{st}\rangle=
 |\hat{d}\rangle_{A}|s\rangle_{B}|t\rangle_{C}+\omega_{3}^{1}|s\rangle_{A}|t\rangle_{B}|\hat{d}\rangle_{C}\\
 &+\omega_{3}^{2}|t\rangle_{A}|\hat{d}\rangle_{B}|s\rangle_{C} \mid  s,t\in \mathbb Z_{\hat{d}},0\leq s+t\leq d-3\},\\
 |S_2\rangle=& \left(\sum_{i\in \mathbb Z_{d}}|i\rangle_{A}\right)\left(\sum_{j\in \mathbb Z_{d}}|j\rangle_{B}\right)\left(\sum_{k\in \mathbb Z_{d}}|k\rangle_{C}\right),
\end{aligned}
\end{equation}
where $\hat{d}=d-1$ and $d^*=d-2.$
\end{theorem}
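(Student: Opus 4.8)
The plan is to reuse the three-step template from the proof of Lemma~\ref{le:333}, now carried out at the level of the families $\mathcal{A}_0,\dots,\mathcal{A}_5$ instead of individual states. First I would confirm the two prerequisites. Counting: $|\mathcal{A}_0|=1$ and $|\mathcal{A}_1|=|\mathcal{A}_2|=d-1$, while $\mathcal{A}_3,\mathcal{A}_4,\mathcal{A}_5$ partition the labels $(k,l)\in\mathbb{Z}_{\hat d}\times\mathbb{Z}_{\hat d}$ according to $k+l=d^*$, $k+l\geq d-1$, and $k+l\leq d-3$, hence contribute $(d-1)^2$ states in total; therefore $\bigl|\bigcup_{i=0}^{5}\mathcal{A}_i\bigr|=1+2(d-1)+(d-1)^2=d^2$ and the whole set has size $d^2+1$. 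Orthogonality of these $d^2+1$ states I would verify directly from their cells, which is routine because the labels are pairwise distinct and the GHZ-/$W$-like phase factors cancel in each overlap. Finally, since the construction is invariant under the cyclic relabelling $A\to B\to C\to A$ (the three bipartition diagrams coincide, as in Fig.~\ref{fig:333A}), it suffices to prove that every orthogonality-preserving measurement on the joint party $BC$ is trivial, i.e. $E_{BC}\propto\mathbb{I}_{BC}$.

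The core is to show that $E_{BC}$ is diagonal. For each state I read off its cells $(i,jk)$ in the $A|BC$ bipartition, and for each pair of states I count the cells sharing a common $A$-row index. A unique shared index lets Observation~\ref{ob1} annihilate the corresponding off-diagonal entry outright; two shared indices let Observation~\ref{ob2} annihilate a new entry once the partner entry in its two-term relation is already known to vanish. I would organize these deductions exactly as in Table~\ref{ta2}: in Step~$1$ the pairs among $\mathcal{A}_0,\mathcal{A}_1,\mathcal{A}_2,\mathcal{A}_5$ (under the listed value ranges and preconditions) kill the ``key'' off-diagonal entries, and in Step~$2$ the zeros so obtained serve as preconditions to feed $\mathcal{A}_3$, $\mathcal{A}_4$, and the remaining cross pairs into Observations~\ref{ob1}--\ref{ob2}, killing every remaining entry. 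The outcome of this stage is $m_{ij,kl}=m_{kl,ij}=0$ for all $ij\neq kl\in\mathbb{Z}_d\times\mathbb{Z}_d$, so $E_{BC}$ is diagonal.

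With $E_{BC}$ diagonal I would then equate the diagonal entries. Applying Observation~\ref{ob3} to $|S_2\rangle$ paired with each GHZ-like state in $\mathcal{A}_0,\mathcal{A}_1,\mathcal{A}_2,\mathcal{A}_3$ equates pairs of diagonal entries $m_{ij,ij}$, and applying Observation~\ref{ob4} to $|S_2\rangle$ paired with each $W$-like state in $\mathcal{A}_4,\mathcal{A}_5$ equates triples. Regarding the $d^2$ diagonal positions as vertices and these equalities as edges, it remains to check that the resulting graph is connected; granting that, all diagonal entries share one real value $r$, whence $E_{BC}=r\,\mathbb{I}_{BC}$. By the cyclic symmetry the same argument yields $E_{CA}\propto\mathbb{I}_{CA}$ and $E_{AB}\propto\mathbb{I}_{AB}$, so only trivial OPLM can be performed in every bipartition and the set is the strongest nonlocal.

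The main obstacle is the combinatorial bookkeeping of Steps~$1$--$2$: one must exhibit, uniformly in $d$, a single global order of the pair-applications in which every off-diagonal label is reached and every invocation of Observation~\ref{ob2} has its precondition already secured. The two delicate points are (i) verifying that the families $\mathcal{A}_i$ are laid out so that no off-diagonal position $m_{jk,j'k'}$ is skipped, and (ii) confirming in Step~$3$ that the GHZ- and $W$-edges fuse all diagonal vertices into a single component --- this is exactly where the split $k+l\geq d-1$ versus $k+l\leq d-3$ of $\mathcal{A}_4,\mathcal{A}_5$ and the diagonal family $\mathcal{A}_3$ with $k+l=d^*$ must interlock correctly, generalizing the $d=3$ pattern displayed in Lemma~\ref{le:333}.
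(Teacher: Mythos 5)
Your overall template matches the paper's: cyclic symmetry reduces everything to $E_{BC}$, Observations~\ref{ob1}--\ref{ob2} kill the key and then the remaining off-diagonal entries as organized in Table~\ref{ta2}, and Observations~\ref{ob3}--\ref{ob4} equate the diagonal. The counting and the reduction to one bipartition are fine. But there is a genuine gap in your Step~1: for even $d$, Observations~\ref{ob1} and \ref{ob2} alone \emph{cannot} annihilate the key entry $m_{\frac{d^*}{2}\hat{d},\hat{d}\frac{d^*}{2}}$. The value ranges in Table~\ref{ta2} exclude exactly this case ($i+j\neq d^*$ for Observation~\ref{ob1} on $\mathcal{A}_1,\mathcal{A}_2$, and $i\neq \frac{d^*}{2}$ for Observation~\ref{ob2}), and the exclusion is not bookkeeping: the pair $|\phi_{\hat{d}\frac{d^*}{2}}\rangle\in\mathcal{A}_1$, $|\phi_{\frac{d^*}{2}\hat{d}}\rangle\in\mathcal{A}_2$ shares \emph{two} row indices and the resulting relation is $m_{\frac{d^*}{2}\hat{d},\hat{d}\frac{d^*}{2}}+m_{\hat{d}\frac{d^*}{2},\frac{d^*}{2}\hat{d}}=0$, i.e.\ $z+\bar{z}=0$, which only forces the real part of this entry to vanish. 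No other pair of states in the set produces an independent constraint on this entry via Observations~\ref{ob1}--\ref{ob2}, so your claim that ``the pairs among $\mathcal{A}_0,\mathcal{A}_1,\mathcal{A}_2,\mathcal{A}_5$ kill the key off-diagonal entries'' fails here.

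The paper closes this hole with an extra idea you do not have: after showing $m_{\frac{d^*}{2}\hat{d},ij}=0$ for all other columns $ij$, it applies the \emph{stopper state} $|S_2\rangle$ against $|\phi_{\frac{d^*}{2}\hat{d}}\rangle$ to get $m_{\frac{d^*}{2}\hat{d},\frac{d^*}{2}\hat{d}}-m_{\hat{d}\frac{d^*}{2},\hat{d}\frac{d^*}{2}}+m_{\frac{d^*}{2}\hat{d},\hat{d}\frac{d^*}{2}}-m_{\hat{d}\frac{d^*}{2},\frac{d^*}{2}\hat{d}}=0$; since the diagonal entries are real and the off-diagonal part is purely imaginary ($z=r\sqrt{-1}$ from the previous relation), the imaginary part of this equation forces $r=0$. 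In other words, the stopper state is needed not only in your Step~3 to equate diagonal entries but also in Step~1 to kill one off-diagonal entry when $d$ is even. Without this (or an equivalent substitute), your argument only establishes the theorem for odd $d$. The rest of your plan --- the ordering of Observation~\ref{ob2} applications and the connectivity of the diagonal-equality graph --- is consistent with what the paper does and would go through.
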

\begin{figure}[ht]
	\centering
	\includegraphics[width=6cm]{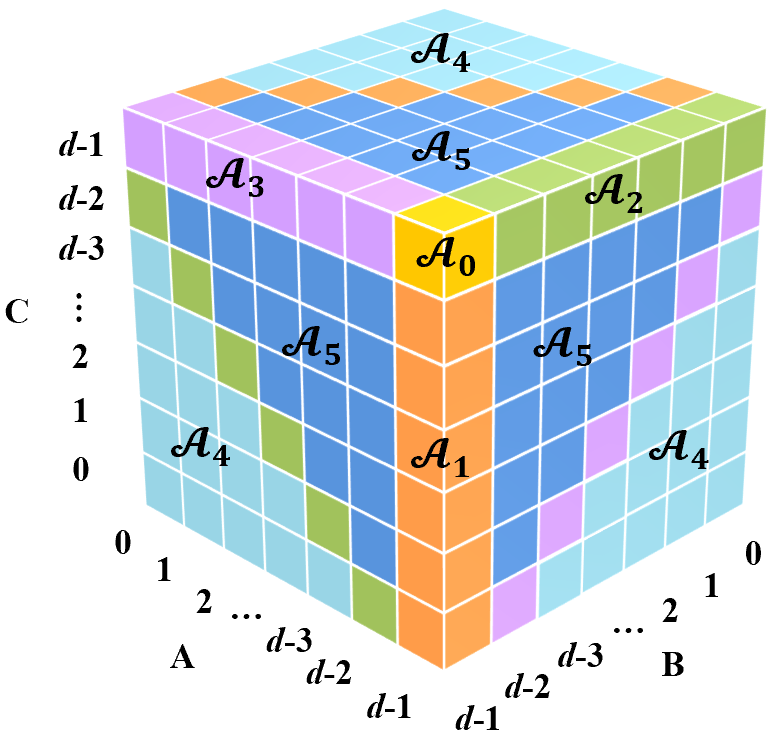}
	\caption{The Rubik's cube corresponding to the strongest nonlocal set with minimum cardinality given by Eq. \eqref{eq:ddd} in $\mathbb{C}^d\otimes$ $\mathbb{C}^d\otimes \mathbb{C}^d$ for $d\geq 3$. }  \label{fig:ddd}
\end{figure}
\begin{proof} The set $\cup_{i=0}^{5}\mathcal{A}_{i}\bigcup \{|S_2\rangle\}$ has a similar plane structure in every bipartition under the cyclic permutation. Therefore,  it is sufficient to show that the matrix $E_{BC}\propto \mathbb{I}_{BC}$.

Since $E_{BC}=E_{BC}^{\dagger}$, if $m_{ij,kl}=0$, then $m_{kl,ij}=0$ for $ij\neq kl\in \mathbb Z_d\times \mathbb Z_d$. For convenience, we divide all off-diagonal entries into two parts, called the key and the remaining off-diagonal entries. The key off-diagonal entries are $m_{ij,kl}$ for $ij\neq kl\in \mathcal S$, where $\mathcal S=\{s\hat{d},\hat{d}s,d^*d^*,\hat{d}\hat{d}\mid s\in \mathbb Z_{\hat{d}}\}$. Naturally, the remaining off-diagonal entries are $m_{pq,st}$ for $pq\neq st$, $pq\in \mathbb Z_d\times \mathbb Z_d\backslash \mathcal S$ and $st\in \mathbb Z_d\times \mathbb Z_d$.

Consider $\mathcal A_0$, $\mathcal A_1$, $\mathcal A_2$ and $|\phi_{d^*d^*}\rangle$ of $\mathcal A_5$, we get all key off-diagonal entries are zeros from Table \ref{ta2}. However, only two off-diagonal entries $m_{\frac{d^*}{2}\hat{d},\hat{d}\frac{d^*}{2}}$ and $m_{\hat{d}\frac{d^*}{2},\frac{d^*}{2}\hat{d}}$ are not available from Table \ref{ta2} when $d$ is even. In order to show that  $m_{\frac{d^*}{2}\hat{d},\hat{d}\frac{d^*}{2}}=m_{\hat{d}\frac{d^*}{2},\frac{d^*}{2}\hat{d}}=0,$ we first prove that $m_{\frac{d^*}{2}\hat{d},ij}=m_{ij,\frac{d^*}{2}\hat{d}}=0$ for $ij \in \mathbb{Z}_d\times \mathbb{Z}_d\backslash \{\mathcal{S},$ $\frac{d^*}{2}\hat{d},\hat{d}\frac{d^*}{2}\}$. We separate the argument into three cases.
\begin{figure*}[ht]
	\centering
	\includegraphics[width=13cm]{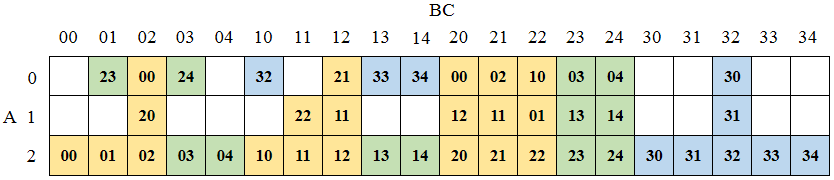}
	\caption{The $3\times 20$ plane structure of the strongest nonlocal set $\cup_{ij\in\mathbb{Z}_4\times \mathbb{Z}_5}\{|\phi_{ij}\rangle\}\bigcup \{|S_3\rangle\}$ corresponds to the bipartition $A|BC$. The yellow, green and blue entries correspond to $\mathcal S_{1}$, $\mathcal S_{2}$ and $\mathcal S_{3}$, respectively. }  \label{fig:345A}
\end{figure*}
\begin{figure}[ht]
	\centering
	\includegraphics[width=8.8cm]{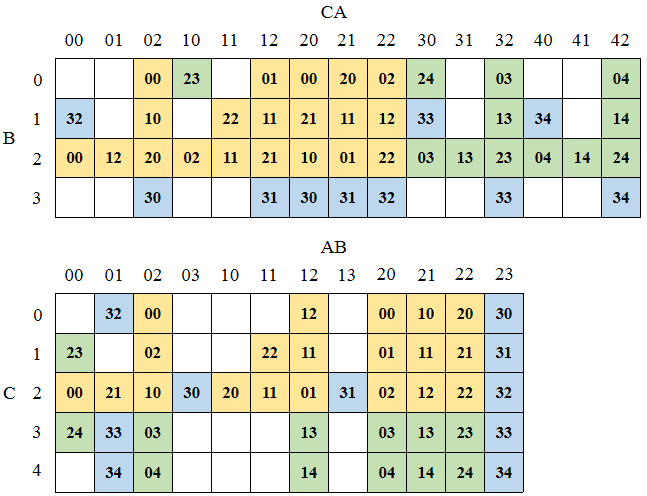}
	\caption{The $4\times 15$ and $5\times 12$ plane structures of the strongest nonlocal set $\cup_{ij\in\mathbb{Z}_4\times \mathbb{Z}_5}\{|\phi_{ij}\rangle\}\bigcup \{|S_3\rangle\}$ correspond to the bipartitions $B|CA$ and $C|AB$.  }  \label{fig:345BC}
\end{figure}
\begin{enumerate}
	\item  [\rm(i)] Neither $i$ nor $j$  is $\frac{d^*}{2}.$ Consider $|\phi_{ \frac{d^*}{2}\hat{d}} \rangle$ and $|\phi_{ij}\rangle $ by Observation \ref{ob1}, then one obtains
	$ m_{\frac{d^*}{2}\hat{d},ij}=0.$
	\item  [\rm(ii)] Only one of $i$ or $j$ is $\frac{d^*}{2}$, then $|\phi_{ij}\rangle $ belongs to $\mathcal{A}_4$ or $\mathcal{A}_5$. As $m_{\hat{d} \frac{d^*}{2},st}=0$ for $st\in \mathcal{S}\backslash \{\frac{d^*}{2}\hat{d}\}$, consider $|\phi_{ \frac{d^*}{2}\hat{d}} \rangle$ and $|\phi_{ij}\rangle $ by Observation \ref{ob2}, then one obtains
	$ m_{\frac{d^*}{2}\hat{d},ij}=0.$
		\item  [\rm(iii)] Both  of  $i$  and  $j$ are $\frac{d^*}{2}$, then $|\phi_{\frac{d^*}{2}\frac{d^*}{2}}\rangle $ belongs to $\mathcal{A}_3.$  As $m_{\hat{d} \frac{d^*}{2},\hat{d}\hat{d} }=0$, consider $|\phi_{ \frac{d^*}{2}\hat{d}} \rangle$ and $|\phi_{\frac{d^*}{2}\frac{d^*}{2}}\rangle $ by Observation \ref{ob2}, then one obtains
	$ m_{\frac{d^*}{2}\hat{d},\frac{d^*}{2}\frac{d^*}{2}}=0.$
\end{enumerate}

Next, we show that $m_{\frac{d^*}{2}\hat{d},\hat{d}\frac{d^*}{2}}=m_{\hat{d}\frac{d^*}{2},\frac{d^*}{2}\hat{d}}=0.$
Applying $|\phi _{\hat{d}\frac{d^*}{2}}\rangle$ of $\mathcal A_1$ and $|\phi _{\frac{d^*}{2}\hat{d}}\rangle$ of $\mathcal A_2$ to Eq. \eqref{eq:OR} results in:
\begin{equation}\label{eq:1}
0=\langle\phi _{\hat{d}\frac{d^*}{2}}|\mathbb{I}_{A}\otimes E_{BC}|\phi _{\frac{d^*}{2}\hat{d}}\rangle=m_{\frac{d^*}{2}\hat{d},\hat{d}\frac{d^*}{2}}+m_{\hat{d}\frac{d^*}{2},\frac{d^*}{2}\hat{d}}.
\end{equation}
As $m_{\frac{d^*}{2}\hat{d},\hat{d}\frac{d^*}{2}}$  is the complex conjugation of  $m_{\hat{d}\frac{d^*}{2},\frac{d^*}{2}\hat{d}},$ $ m_{\frac{d^*}{2}\hat{d},\hat{d}\frac{d^*}{2}}$ must be of the form
$r \sqrt{-1}$ for some real number $r$. Then, applying $|\phi _{\frac{d^*}{2}\hat{d}}\rangle$ of $\mathcal A_2$ and the stopper state $|S_2\rangle$ to Eq. \eqref{eq:OR} gives the following equation:
 \begin{equation*}
 	 0=\langle\phi _{\frac{d^*}{2}\hat{d}}|\mathbb{I}_{A}\otimes E_{BC}|S_2\rangle,
 \end{equation*}
that is,
 \begin{equation}\label{eq:2}
 0=m_{\frac{d^*}{2}\hat{d},\frac{d^*}{2}\hat{d}}-m_{\hat{d}\frac{d^*}{2},\hat{d}\frac{d^*}{2}} +m_{\frac{d^*}{2}\hat{d},\hat{d}\frac{d^*}{2}}-m_{\hat{d}\frac{d^*}{2},\frac{d^*}{2}\hat{d}}. \end{equation}
As $m_{\frac{d^*}{2}\hat{d},\frac{d^*}{2}\hat{d}}$ and $m_{\hat{d}\frac{d^*}{2},\hat{d}\frac{d^*}{2}}$ are real numbers and $ m_{\hat{d}\frac{d^*}{2},\frac{d^*}{2}\hat{d}}=-r\sqrt{-1}$, the Eq. \eqref{eq:2} implies that both the real and imaginary part are zeros. Therefore, $r=0.$  Hence, $m_{\frac{d^*}{2}\hat{d},\hat{d}\frac{d^*}{2}}=m_{\hat{d}\frac{d^*}{2},\frac{d^*}{2}\hat{d}}=0.$
Thus, all key off-dia- gonal entries $m_{ij,kl}$ are zeros, where $ij\neq kl\in \mathcal S$.

Let us show that all remaining off-diagonal entries  $m_{pq,st}$ are zeros for $pq\neq st$, $pq\in \mathbb Z_d\times \mathbb Z_d\backslash \mathcal S$ and $st\in \mathbb Z_d\times \mathbb Z_d$.  Consider $|\phi_{pq}\rangle$ and $|\phi_{st}\rangle$ by Eq. \eqref{eq:OR}, then we obtain the following equation:
$$0=\langle\phi _{pq}|\mathbb I_{A}\otimes E_{BC}|\phi _{st}\rangle= m_{pq,st} + a m_{p_aq_a,s_at_a}+b m_{p_bq_b,s_bt_b}.$$
Obviously, most of them are only composed of two terms. For $m_{p_xq_x,s_xt_x} (x\in\{a,b\}) $, the terms come from the key off-diagonal entries that are zeros. Therefore, it is not difficult to show that we can apply  Observation \ref{ob1} or Observation \ref{ob2} to  $|\phi_{pq}\rangle$ and $|\phi_{st}\rangle$, which yields $m_{pq,st}=0$.

Eventually, applying Observation \ref{ob3} to the stopper state $|S_2\rangle$ and $\mathcal A_i$ $(i\in \mathbb Z_4)$, some diagonal entries are equal, i.e., $m_{\hat{d}i,\hat{d}i}=m_{\hat{d}(d^*-i),\hat{d}(d^*-i)}=m_{i\hat{d},i\hat{d}}=m_{(d^*-i)\hat{d},(d^*-i)\hat{d}}$ for $i\in \mathbb Z_{\left\lceil\frac{d-1}{2}\right\rceil}$. Consider the stopper state $|S_2\rangle$ and $\mathcal A_4$ and $\mathcal A_5$ by Observation \ref{ob4}, we deduce that $m_{kl,kl}=m_{l\hat{d},l\hat{d}}=m_{\hat{d}k,\hat{d}k}$  for $k,l\in\mathbb{Z}_{\hat{d}}$, $k+l\neq  d^*$. Therefore, the matrix $E_{BC}\propto \mathbb{I}_{BC}$.
\end{proof}
\section{The strongest nonlocal sets with minimum cardinality in $\mathbb{C}^{d_1}\otimes \mathbb{C}^{d_2}\otimes \mathbb{C}^{d_3}$}

First of all, we construct the strongest nonlocal set of the smallest size in $\mathbb{C}^{3}\otimes \mathbb{C}^{4}\otimes \mathbb{C}^{5}$ based on the set given by Lemma~\ref{le:333}.

\begin{lemma}\label{le:345}    The set $\cup_{ij\in\mathbb{Z}_4\times \mathbb{Z}_5}\{|\phi_{ij}\rangle\}\bigcup \{|S_3\rangle\}$ of size $21$ given by Eqs. \eqref{eq:333} and \eqref{eq:345} is strongest nonlocal in $\mathbb{C}^{3}\otimes \mathbb{C}^{4}\otimes \mathbb{C}^{5}$:
\begin{equation*}\label{}
\begin{aligned}
|\phi_{03}\rangle=&|2\rangle_{A}|0\rangle_{B}|3\rangle_{C}-|0\rangle_{A}|2\rangle_{B}|3\rangle_{C},\\
|\phi_{13}\rangle=&|2\rangle_{A}|1\rangle_{B}|3\rangle_{C}-|1\rangle_{A}|2\rangle_{B}|3\rangle_{C},\\
|\phi_{04}\rangle=&|2\rangle_{A}|0\rangle_{B}|4\rangle_{C}-|0\rangle_{A}|2\rangle_{B}|4\rangle_{C},\\
|\phi_{14}\rangle=&|2\rangle_{A}|1\rangle_{B}|4\rangle_{C}-|1\rangle_{A}|2\rangle_{B}|4\rangle_{C},\\
|\phi_{23}\rangle=&|2\rangle_{A}|2\rangle_{B}|3\rangle_{C}-|0\rangle_{A}|0\rangle_{B}|1\rangle_{C},\\
|\phi_{24}\rangle=&|2\rangle_{A}|2\rangle_{B}|4\rangle_{C}-|0\rangle_{A}|0\rangle_{B}|3\rangle_{C},\\
\end{aligned}
\end{equation*}
\begin{equation}\label{eq:345}
\begin{aligned}
|\phi_{30}\rangle=&|2\rangle_{A}|3\rangle_{B}|0\rangle_{C}-|0\rangle_{A}|3\rangle_{B}|2\rangle_{C},\\
|\phi_{31}\rangle=&|2\rangle_{A}|3\rangle_{B}|1\rangle_{C}-|1\rangle_{A}|3\rangle_{B}|2\rangle_{C},\\
|\phi_{32}\rangle=&|2\rangle_{A}|3\rangle_{B}|2\rangle_{C}-|0\rangle_{A}|1\rangle_{B}|0\rangle_{C},\\
|\phi_{33}\rangle=&|2\rangle_{A}|3\rangle_{B}|3\rangle_{C}-|0\rangle_{A}|1\rangle_{B}|3\rangle_{C},\\
|\phi_{34}\rangle=&|2\rangle_{A}|3\rangle_{B}|4\rangle_{C}-|0\rangle_{A}|1\rangle_{B}|4\rangle_{C},\\
|S_3\rangle=&|0+1+2\rangle_{A}|0+1+2+3\rangle_{B}|0+1+2+3+4\rangle_{C},\\
\end{aligned}
\end{equation}
where $\cup_{i,j\in\mathbb{Z}_3}\{|\phi_{ij}\rangle\}$ are the same as the states given by Eq.~\eqref{eq:333}.
\end{lemma}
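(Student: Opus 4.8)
The plan is to reuse the three-step template from the proof of Lemma~\ref{le:333}, but to run it once for \emph{each} of the three bipartitions. The unequal dimensions destroy the cyclic symmetry that let us examine only $BC$ in the cube case: here the plane structures are genuinely different, namely $3\times 20$, $4\times 15$, and $5\times 12$ for $A|BC$, $B|CA$, and $C|AB$ (Figs.~\ref{fig:345A} and \ref{fig:345BC}). So I must establish $E_{BC}\propto\mathbb{I}_{BC}$, $E_{CA}\propto\mathbb{I}_{CA}$, and $E_{AB}\propto\mathbb{I}_{AB}$ separately. A useful economy is that the nine states $\cup_{i,j\in\mathbb{Z}_3}\{|\phi_{ij}\rangle\}$ (the yellow cells $\mathcal{S}_1$ of Figs.~\ref{fig:345A}--\ref{fig:345BC}) are exactly the $\mathbb{C}^3\otimes\mathbb{C}^3\otimes\mathbb{C}^3$ core of Lemma~\ref{le:333} and retain their own cyclic symmetry; in every bipartition they occupy a $3\times 9$ sub-block whose off-diagonal entries are annihilated verbatim by the core-only arguments of that lemma (Steps~1--2 there invoke no stopper state). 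What is new is only the interaction of the extension states $|\phi_{03}\rangle,\dots,|\phi_{34}\rangle$ (the green and blue cells $\mathcal{S}_2,\mathcal{S}_3$) with the core and with $|S_3\rangle$.

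Take $A|BC$ first. In Step~1 I designate the \emph{key} off-diagonal entries $m_{jk,j'k'}$ as those whose two cells share a row index in $A$, and zero them from Eq.~\eqref{eq:OR}. A pair of states sharing a unique row index is handled directly by Observation~\ref{ob1}; a pair sharing two row indices yields, via Observation~\ref{ob2}, a short linear relation that collapses to $m_{jk,j'k'}=0$ once the companion entries --- themselves key entries already processed --- are substituted. The cells being spread over distinct rows, I expect to order the pairs so that every Observation~\ref{ob2} step only invokes previously-zeroed entries, as in Table~\ref{ta1}. Step~2 then clears the remaining off-diagonal entries: for each such $m_{pq,st}$ the corresponding instance of Eq.~\eqref{eq:OR} is a two- or three-term identity whose non-principal terms are key entries, hence already zero, so Observation~\ref{ob1} or \ref{ob2} forces $m_{pq,st}=0$. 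This makes $E_{BC}$ diagonal.

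In Step~3, with $E_{BC}$ diagonal I apply Observation~\ref{ob3} to the pair $(|S_3\rangle,|\Phi\rangle)$ for each GHZ-like state and Observation~\ref{ob4} for each $W$-like state. Each relation equates two or three diagonal entries; because $|S_3\rangle$ has support on every cell and the extension states chain the new rows and columns back to the core, these identifications link all diagonal entries into one common value, giving $E_{BC}\propto\mathbb{I}_{BC}$. The treatments of $B|CA$ and $C|AB$ are structurally identical once the cells are read off from the $4\times 15$ and $5\times 12$ planes of Fig.~\ref{fig:345BC}.

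The main obstacle is bookkeeping under broken symmetry: the off-diagonal/diagonal analysis must be carried out three times, and in each of the $4\times 15$ and $5\times 12$ pictures I must verify that a valid ordering of the Observation~\ref{ob2} steps actually exists, with no circular dependency among the key entries. I also anticipate that a few cross-entries pairing two cells that agree in two row indices may escape Observations~\ref{ob1}--\ref{ob2}; for these I will use $E=E^{\dagger}$ to argue the entry is purely imaginary, say $r\sqrt{-1}$, and then combine a stopper-state instance of Eq.~\eqref{eq:OR} with its complex conjugate to force $r=0$, exactly as in Eqs.~\eqref{eq:1}--\eqref{eq:2} of the proof of Theorem~\ref{th:ddd}. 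The last check I would not skip is the connectivity of the diagonal-equality graph in all three bipartitions, which is precisely what the extension states are designed to guarantee.
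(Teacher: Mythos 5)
Your plan is correct and follows essentially the same route as the paper's proof: treat the three bipartitions separately, inherit the vanishing of the $3\times 9$ core sub-block from the stopper-free Steps 1--2 of Lemma~\ref{le:333}, zero the new off-diagonal entries by Observations~\ref{ob1} and \ref{ob2} applied to pairs drawn from $\mathcal{S}_1$, $\mathcal{S}_2$, $\mathcal{S}_3$, and then chain all diagonal entries together with $|S_3\rangle$ via Observations~\ref{ob3} and \ref{ob4}. The only (harmless) divergence is your contingency for entries escaping Observations~\ref{ob1}--\ref{ob2} via the purely-imaginary argument of Eqs.~\eqref{eq:1}--\eqref{eq:2}; the paper's bookkeeping shows this is never needed in the $3\times 4\times 5$ case.
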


\begin{proof} Denote $\mathcal S_{1}=\cup_{i,j\in\mathbb{Z}_3}\{|\phi_{ij}\rangle\}$, $\mathcal{S}_{2}=\cup_{i\in \mathbb{Z}_3,j\in\{3,4\}}$ $\{|\phi_{ij}\rangle\}$ and $\mathcal{S}_{3}=\cup_{i\in \mathbb{Z}_5}\{|\phi_{3i}\rangle\}$. In Figs. \ref{fig:345A} and \ref{fig:345BC}, we show the plane structures of $\cup_{i=1}^{3}\mathcal{S}_{i}\bigcup \{|S_3\rangle\}$ in every bipartition.

In Lemma~\ref{le:333}, $\mathcal{S}_{1}\bigcup \{|S_1\rangle\}$ is the strongest nonlocal set in $\mathbb{C}^{3}\otimes \mathbb{C}^{3}\otimes \mathbb{C}^{3}$, this means that the matrix $E=(m_{ij,kl})_{ij, kl \in \mathbb Z_{3}\times \mathbb Z_{3}}$ is proportional to an identity matrix. Since $E=E^{\dagger}$, if $m_{ij,kl}=0$, there must be $m_{kl,ij}=0$, where $ij\neq kl\in \mathbb Z_{3}\times \mathbb Z_{3}$.

For the matrix $E_{BC}=(m_{ij,kl})_{ij, kl \in \mathbb Z_{4}\times \mathbb Z_{5}}$. Consider $\mathcal{S}_{2}$ and $\mathcal{S}_{1}$ by Observations \ref{ob1} and \ref{ob2}, like $|\phi_{23}\rangle$ and $\mathcal{S}_{1}$, we get $m_{23,ij}=0$ for $ij\in \mathbb Z_3 \times \mathbb Z_3$; from $|\phi_{03}\rangle$, $|\phi_{13}\rangle$, and $\mathcal{S}_{1}$, we have $m_{03,ij}=0$ and $m_{13,ij}=0$, where $ij\in \mathbb Z_3 \times \mathbb Z_3$; for $|\phi_{24}\rangle$ and $\mathcal{S}_{1}$, we get $m_{24,ij}=0$ for $ij\in \mathbb Z_3 \times \mathbb Z_3$; and from $|\phi_{04}\rangle$, $|\phi_{14}\rangle$, and $\mathcal{S}_{1}$, we obtain $m_{04,ij}=0$ and $m_{14,ij}=0$, where $ij\in \mathbb Z_3 \times \mathbb Z_3$. Consider $\mathcal{S}_{2}$, we get $m_{i3,j3}=m_{i4,j4}=m_{k3,l4}=0$ for $i\neq j\in \mathbb Z_3$, $k,l\in \mathbb Z_3$. Consider $\mathcal{S}_{3}$ and $\mathcal{S}_{1}$, like $|\phi_{32}\rangle$ and $\mathcal{S}_{1}$, we get $m_{32,ij}=0$ for $ij\in \mathbb Z_3 \times \mathbb Z_3$; given $|\phi_{3s}\rangle$ and $\mathcal{S}_{1}$, we have $m_{3s,ij}=0$, where $s\in \{0,1,3,4\}$, $ij\in \mathbb Z_3 \times \mathbb Z_3$. From $\mathcal{S}_{3}$ and $\mathcal{S}_{2}$, we get $m_{3i,kl}=0$ for $i\in \mathbb Z_5$, $kl\in \{j3,~j4|j\in \mathbb Z_3\}$. Consider $\mathcal{S}_{3}$, we provide $m_{3i,3j}=0$ for $i\neq j\in \mathbb Z_5$. Hereto we prove that all off-diagonal entries are zeros by Observations \ref{ob1} and \ref{ob2}, i.e., $m_{ij,kl}=0$ for $ij\neq kl$, $ij\in \{s3,~s4,~3t|s\in \mathbb Z_{3}, t\in \mathbb Z_{5}\}$, $kl\in \mathbb Z_4\times \mathbb Z_5$. Applying Observation \ref{ob3} to $\mathcal{S}_{2}$, $\mathcal{S}_{3}$, and $|S\rangle$, we obtain that all diagonal entries are equal, i.e., $m_{ij,ij}=m_{kl,kl}$ for $ij\neq kl\in \mathbb Z_4\times \mathbb Z_5$. Thus, the matrix $E_{BC}\propto \mathbb{I}_{BC}$.

Consider the matrix $E_{CA}=(m_{ij,kl})_{ij, kl \in \mathbb Z_{5}\times \mathbb Z_{3}}$. Applying Observations \ref{ob1} and \ref{ob2} to $\mathcal{S}_{2}$ and $\mathcal{S}_{1}$, like $|\phi_{23}\rangle$ and $\mathcal{S}_{1}$, we get $m_{32,ij}=0$ for $ij\in \mathbb Z_3 \times \mathbb Z_3$; from $|\phi_{03}\rangle$, $|\phi_{13}\rangle$, and $\mathcal{S}_{1}$, we have $m_{30,ij}=0$ and $m_{31,ij}=0$ for $ij\in \mathbb Z_3 \times \mathbb Z_3$; given $|\phi_{24}\rangle$ and $\mathcal{S}_{1}$, we get $m_{42,ij}=0$ for $ij\in \mathbb Z_3 \times \mathbb Z_3$; and from $|\phi_{04}\rangle$, $|\phi_{14}\rangle$, and $\mathcal{S}_{1}$, we obtain $m_{40,ij}=0$ and $m_{41,ij}=0$ for $ij\in \mathbb Z_3 \times \mathbb Z_3$. Consider $\mathcal{S}_{2}$, we have $m_{3i,3j}=m_{4i,4j}=m_{3k,4l}=0$ for $i\neq j\in \mathbb Z_3$, $k,l\in \mathbb Z_3$. Applying Observation \ref{ob3} to $\mathcal{S}_{2}$ and $|S_3\rangle$, we present that $m_{3i,3i}=m_{4i,4i}=m_{10,10}$ for $i\in \mathbb Z_3$. So, the matrix $E_{CA}$ is proportional to an identity matrix.

For the matrix $E_{AB}=(m_{ij,kl})_{ij, kl \in \mathbb Z_{3}\times \mathbb Z_{4}}$. Consider states $|\phi_{30}\rangle$, $|\phi_{31}\rangle$, $|\phi_{32}\rangle$, and $\mathcal{S}_{1}$, we get $m_{i3,kl}=0$ for $i\in \mathbb Z_3$, $kl\in \mathbb Z_3\times \mathbb Z_3$.
Applying Observations \ref{ob1} and \ref{ob2} to $|\phi_{30}\rangle$, $|\phi_{31}\rangle$, and $|\phi_{32}\rangle$, we yield $m_{i3,j3}=0$ for $i\neq j\in \mathbb Z_3$. From $|\phi_{30}\rangle$, $|\phi_{31}\rangle$, $|\phi_{32}\rangle$, and $|S_3\rangle$ by Observation \ref{ob3}, we have $m_{03,03}=m_{13,13}=m_{23,23}=m_{01,01}$. Thus the POVM element $E_{AB}$ is trivial, which completes the proof.
\end{proof}

In Ref.~\cite{Li2023PRA}, Li {\it et al}. proposed the strongest nonlocal set of size $22$ in $\mathbb{C}^{3}\otimes \mathbb{C}^{4}\otimes \mathbb{C}^{5}$. Here, we construct the strongest nonlocal set of the smallest size $21$, which reaches the low bound on the strongest nonlocal sets in $\mathbb{C}^{3}\otimes \mathbb{C}^{4}\otimes \mathbb{C}^{5}$.

\begin{theorem} \label{th:d1d2d3}  The set $\cup_{i=0}^{13}\mathcal{A}_{i}\bigcup \{|S\rangle\}$ of size $d_{2}d_{3}+1$ given by Eqs. \eqref{eq:ddd} and \eqref{eq:d1d2d3} is strongest nonlocal in $\mathbb{C}^{d_{1}}\otimes$ $\mathbb{C}^{d_2}\otimes \mathbb{C}^{d_3}$ for $3\leq d_1\leq d_2\leq d_3$:
\begin{widetext}
\begin{equation}\label{eq:d1d2d3}
\begin{aligned}
\mathcal{A}_{6}=&\{|\phi_{i(d_{1}+j)}\rangle=|\hat{d_{1}}\rangle_{A}|i\rangle_{B}|d_{1}+j\rangle_{C}-|i\rangle_{A}|\hat{d_{1}}\rangle_{B}|d_{1}+j\rangle_{C}\mid i\in \mathbb Z_{\hat{d_{1}}}, j\in \mathbb Z_{d_{3}-d_{1}}\},\\
\mathcal{A}_{7}=&\{|\phi_{\hat{d_{1}}d_{1}}\rangle=|\hat{d_{1}}\rangle_{A}|\hat{d_{1}}\rangle_{B}|d_{1}\rangle_{C}-|0\rangle_{A}|0\rangle_{B}|1\rangle_{C}\},\\
\mathcal{A}_{8}=&\{|\phi_{\hat{d_{1}}(d_{1}+1+i)}\rangle=|\hat{d_{1}}\rangle_{A}|\hat{d_{1}}\rangle_{B}|d_{1}+1+i\rangle_{C}-|0\rangle_{A}|0\rangle_{B}|d_{1}+i\rangle_{C}\mid i\in \mathbb Z_{d_{3}-d_{1}-1}\},\\
\mathcal{A}_{9}=&\{|\phi_{(d_{1}+j)i}\rangle=|\hat{d_{1}}\rangle_{A}|d_{1}+j\rangle_{B}|i\rangle_{C}-|i\rangle_{A}|d_{1}+j\rangle_{B}|\hat{d_{1}}\rangle_{C}\mid i\in \mathbb Z_{\hat{d_{1}}}, j\in \mathbb Z_{d_{2}-d_{1}}\},\\
\mathcal{A}_{10}=&\{|\phi_{d_{1}\hat{d_{1}}}\rangle=|\hat{d_{1}}\rangle_{A}|d_{1}\rangle_{B}|\hat{d_{1}}\rangle_{C}-|0\rangle_{A}|1\rangle_{B}|0\rangle_{C}\},\\
\mathcal{A}_{11}=&\{|\phi_{(d_{1}+1+i)\hat{d_{1}}}\rangle=|\hat{d_{1}}\rangle_{A}|d_{1}+1+i\rangle_{B}|\hat{d_{1}}\rangle_{C}-|0\rangle_{A}|d_{1}+i\rangle_{B}|0\rangle_{C}\mid i\in \mathbb Z_{d_{2}-d_{1}-1}\},\\
\mathcal{A}_{12}=&\{|\phi_{d_{1}(d_{1}+i)}\rangle=|\hat{d_{1}}\rangle_{A}|d_{1}\rangle_{B}|d_{1}+i\rangle_{C}-|0\rangle_{A}|1\rangle_{B}|d_{1}+i\rangle_{C}\mid i\in \mathbb Z_{d_{3}-d_{1}}\},\\
\mathcal{A}_{13}=&\{|\phi_{(d_{1}+1+i)(d_{1}+j)}\rangle=|\hat{d_{1}}\rangle_{A}|d_{1}+1+i\rangle_{B}|d_{1}+j\rangle_{C}-|0\rangle_{A}|d_{1}+i\rangle_{B}|d_{1}+j\rangle_{C}\mid i\in \mathbb Z_{d_{2}-d_{1}-1}, j\in \mathbb Z_{d_{3}-d_{1}}\},\\
|S\rangle=&\left(\sum_{i\in \mathbb Z_{d_1}}|i\rangle_{A}\right)\left(\sum_{j\in \mathbb Z_{d_2}}|j\rangle_{B}\right)\left(\sum_{k\in \mathbb Z_{d_3}}|k\rangle_{C}\right),\\
\end{aligned}
\end{equation}
\end{widetext}
where $\hat{d_1}=d_{1}-1$, and $\mathcal{A}_{i}$ $(i\in \mathbb Z_6)$ are similar to the sets given by Eq.~\eqref{eq:ddd}, except that $d$ is replaced by $d_{1}$.
\end{theorem}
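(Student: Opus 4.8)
The plan is to show that each of the three POVM elements $E_{BC}$, $E_{CA}$, and $E_{AB}$ is proportional to the identity. Unlike Theorem~\ref{th:ddd}, the unequal dimensions $d_1\le d_2\le d_3$ destroy the cyclic symmetry among the bipartitions, so I would treat the three matrices separately, exactly as in the proof of Lemma~\ref{le:345}. The guiding observation is that $\cup_{i=0}^{5}\mathcal{A}_i$ is precisely the $d_1^3$ construction of Theorem~\ref{th:ddd} (with $d$ replaced by $d_1$), so it already controls the entries of $E_{BC}$ indexed inside the core block $\mathbb{Z}_{d_1}\times\mathbb{Z}_{d_1}$; the extra families $\mathcal{A}_6,\dots,\mathcal{A}_{13}$ are designed only to propagate triviality out to the ``extended'' indices, namely those whose $B$-label is $\ge d_1$, whose $C$-label is $\ge d_1$, or both (the corner).

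For $E_{BC}$ I would proceed in three stages mirroring Theorem~\ref{th:ddd}. First, the argument of Theorem~\ref{th:ddd} applies verbatim to the core $\cup_{i=0}^{5}\mathcal{A}_i$: every off-diagonal entry $m_{ij,kl}$ with $ij\ne kl\in\mathbb{Z}_{d_1}\times\mathbb{Z}_{d_1}$ vanishes, and all diagonal entries inside the core share a common value $r$. Second, I would eliminate the off-diagonal entries that touch the extended region. Each state of $\mathcal{A}_6,\dots,\mathcal{A}_{13}$ is GHZ-like, so pairing it with a core state or with another extended state yields, through Observation~\ref{ob1}, a single new off-diagonal entry, or through Observation~\ref{ob2}, a two-term relation whose second term has already been shown to be zero. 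Running through the families in the order $\mathcal{A}_6,\mathcal{A}_9$ (single-direction extensions), then $\mathcal{A}_7,\mathcal{A}_8,\mathcal{A}_{10},\mathcal{A}_{11}$, and finally $\mathcal{A}_{12},\mathcal{A}_{13}$ (the corner) forces every off-diagonal entry of $E_{BC}$ to vanish. Third, with $E_{BC}$ now diagonal, I would apply Observation~\ref{ob3} to the stopper $|S\rangle$ together with each two-cell state of $\mathcal{A}_6,\dots,\mathcal{A}_{13}$ to equate each extended diagonal entry with $r$, giving $E_{BC}\propto\mathbb{I}_{BC}$.

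The treatments of $E_{CA}$ and $E_{AB}$ follow the same template, with the roles of the $B$- and $C$-labels exchanged and the column index ranging over $\mathbb{Z}_{d_3}\times\mathbb{Z}_{d_1}$ and $\mathbb{Z}_{d_1}\times\mathbb{Z}_{d_2}$, respectively. Because $A$ carries the smallest dimension $d_1$, these two matrices have fewer extended rows, and the families $\mathcal{A}_6,\dots,\mathcal{A}_{13}$ supply exactly the pairings required in each case; the $\mathbb{C}^3\otimes\mathbb{C}^4\otimes\mathbb{C}^5$ instance worked out in Lemma~\ref{le:345} is the template for this bookkeeping.

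I expect the main obstacle to be combinatorial rather than conceptual: one must verify that for every extended off-diagonal entry there is a pair of states whose cells share exactly the right row indices for Observation~\ref{ob1} or~\ref{ob2} to apply, and that whenever Observation~\ref{ob2} is invoked its auxiliary term has genuinely been eliminated beforehand, i.e.\ that the chosen elimination order is self-consistent across all three bipartitions. A secondary point to check is that the even-$d_1$ exceptional entries inherited from the core (the analogues of $m_{\frac{d^*}{2}\hat{d},\hat{d}\frac{d^*}{2}}$ in Theorem~\ref{th:ddd}) are resolved \emph{inside} the core step before the extension begins, so that no circular dependence arises.
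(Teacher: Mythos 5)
Your proposal matches the paper's proof in both structure and substance: the paper likewise treats the three bipartitions separately (as in Lemma~\ref{le:345}), invokes Theorem~\ref{th:ddd} to handle the core block $\mathbb{Z}_{d_1}\times\mathbb{Z}_{d_1}$ (including the even-$d_1$ exceptional entries), then propagates vanishing of the extended off-diagonal entries via Observations~\ref{ob1} and~\ref{ob2} applied to $\mathcal{A}_6$--$\mathcal{A}_{13}$ in a key-then-remaining order, and finally equalizes the diagonal with the stopper via Observations~\ref{ob3} and~\ref{ob4}. The only difference is the bookkeeping order in which the extension families are consumed, which does not change the argument.
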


\begin{proof}
Denote that $\mathcal{B}_{1}=\cup_{i=0}^{5}\mathcal{A}_{i}$, $\mathcal{B}_{2}=\cup_{i=6}^{8}\mathcal{A}_{i}$, and $\mathcal{B}_{3}=\cup_{i=9}^{13}\mathcal{A}_{i}$. In Theorem~\ref{th:ddd}, $\mathcal{B}_{1}\bigcup \{|S\rangle\}$ is the strongest nonlocal set in $\mathbb{C}^{d_1}\otimes \mathbb{C}^{d_1}\otimes \mathbb{C}^{d_1}$, it implies that $m_{ij,kl}=0$ and $m_{ij,ij}=m_{kl,kl}$ for $ij\neq kl \in \mathbb Z_{d_1}\times \mathbb Z_{d_1}$. In order to show that the matrices $E_{BC}$, $E_{CA}$, and $E_{AB}$ are trivial, we divide all off-diagonal entries into two parts in every bipartition, i.e., the key and the remaining off-diagonal entries.

In the bipartition $A|BC$, the key off-diagonal entries are $m_{ij,kl}$ for $ij\neq kl\in \mathcal K_{1}$, where $\mathcal K_1=\{d_{1}^{*}d_{1}^{*}, i\hat{d_1}, \hat{d_1}j\mid d_{1}^{*}=d_1-2, i\in \mathbb Z_{d_2}, j\in \mathbb Z_{d_3}\}$; the remaining off-diagonal entries in the matrix $E_{BC}$ are $m_{pq,st}$ for $pq\neq st$, $pq\in \mathbb Z_{d_2}\times \mathbb Z_{d_3}\backslash \mathcal K_{1}$, and $st\in \mathbb Z_{d_2}\times \mathbb Z_{d_3}$. Applying Observations \ref{ob1} and \ref{ob2} to the sets $\mathcal{B}_{1}$, $\mathcal{A}_{7}$, $\mathcal{A}_{8}$, $\mathcal{A}_{10}$ and $\mathcal{A}_{11}$, we get that all key off-diagonal entries are zeros. Since $m_{ij,kl}=m_{kl,ij}=0$ for $ij\neq kl\in \mathcal K_{1}$, applying Observation \ref{ob1}, \ref{ob2} to $\mathcal{B}_{1}$, $\mathcal{B}_{2}$, and $\mathcal{B}_{3}$, we obtain that all off-diagonal entries are zeros. Consider $\mathcal{B}_{1}$, $\mathcal{B}_{2}$, $\mathcal{B}_{3}$, and $|S\rangle$ by Observations \ref{ob3} and \ref{ob4}, and we have $m_{ij,ij}=m_{kl,kl}$ for $ij\neq kl\in \mathbb Z_{d_2}\times \mathbb Z_{d_3}$. Thus, the matrix $E_{BC}$ is trivial.

Consider the bipartition $B|CA$, the key off-diagonal entries are $m_{ij,kl}$ for $ij\neq kl\in \mathcal K_{2}$, where $\mathcal K_2=\{d_{1}^{*}d_{1}^{*},$ $j\hat{d_1} \mid d_{1}^{*}=d_1-2, j\in \mathbb Z_{d_3}\}$; the remaining off-diagonal entries in the matrix $E_{CA}$ are $m_{pq,st}$ for $pq\neq st$, $pq\in \mathbb Z_{d_3}\times \mathbb Z_{d_1}\backslash \mathcal K_{2}$ and $st\in \mathbb Z_{d_3}\times \mathbb Z_{d_1}$. Applying Observations \ref{ob1} and \ref{ob2} to $\mathcal{B}_{1}$, $\mathcal{A}_{7}$, and $\mathcal{A}_{8}$, we obtain that $m_{ij,kl}=m_{kl,ij}=0$ for $ij\neq kl\in \mathcal K_{2}$. Based on the fact that all key off-diagonal entries are zeros, applying Observations \ref{ob1} and \ref{ob2} to $\mathcal{B}_{1}$ and $\mathcal{B}_{2}$, we have $m_{ij,kl}=m_{kl,ij}=0$ for $ij\neq kl\in \mathbb Z_{d_3}\times \mathbb Z_{d_1}$. From $\mathcal{B}_{1}$, $\mathcal{B}_{2}$, and $|S\rangle$ by applying Observations \ref{ob3} and \ref{ob4}, it implies that $m_{ij,ij}=m_{kl,kl}$ for $ij\neq kl\in \mathbb Z_{d_3}\times \mathbb Z_{d_1}$. Therefore, the matrix $E_{CA}\propto \mathbb{I}_{CA}$.

For the bipartition $C|AB$, the key off-diagonal entries are $m_{ij,kl}$ for $ij\neq kl\in \mathcal K_{3}$, where $\mathcal K_3=\{d_{1}^{*}d_{1}^{*}, \hat{d_1}i \mid d_{1}^{*}=d_1-2, i\in \mathbb Z_{d_2}\}$; the remaining off-diagonal entries in the matrix $E_{AB}$ are $m_{pq,st}$ for $pq\neq st$, $pq\in \mathbb Z_{d_1}\times \mathbb Z_{d_2}\backslash \mathcal K_{3}$, and $st\in \mathbb Z_{d_1}\times \mathbb Z_{d_2}$. According to $\mathcal{B}_{1}$, $\mathcal{A}_{10}$ and $\mathcal{A}_{11}$ by applying Observations \ref{ob1} and \ref{ob2}, we obtain that all key off-diagonal entries are zeros. Since $m_{ij,kl}=m_{kl,ij}=0$ for $ij\neq kl\in \mathcal K_{3}$, applying Observations \ref{ob1} and \ref{ob2} to $\mathcal{B}_{1}$ and $\mathcal{B}_{3}$, we get that all off-diagonal entries are zeros. Considering $\mathcal{B}_{1}$, $\mathcal{B}_{3}$ and $|S\rangle$ by Observations \ref{ob3} and \ref{ob4}, we deduce $m_{ij,ij}=m_{kl,kl}$ for $ij\neq kl\in \mathbb Z_{d_1}\times \mathbb Z_{d_2}$. Hence, the matrix $E_{AB}\propto \mathbb{I}_{AB}$. To sum up, we successfully show that the set $\cup_{i=1}^{3}\mathcal{B}_{i}\bigcup \{|S\rangle\}$ is the strongest nonlocal set in general tripartite systems.
\end{proof}

\section{Conclusion}

In this paper, we constructed the strongest nonlocal sets with minimum cardinality in general tripartite systems. Our result positively answers an open conjecture proposed in Ref.~\cite{Li2023}. From Table~\ref{results}, the size of the strongest nonlocal sets we constructed is the smallest of all previous results in $\mathbb{C}^{d_1}\otimes \mathbb{C}^{d_2}\otimes \mathbb{C}^{d_3}$. In particular, the strongest nonlocal set given by Theorem~\ref{th:ddd} consists of $d^2$ orthogonal genuine entangled states except the stopper state, and the strongest nonlocal set given by Theorem~\ref{th:d1d2d3} contains $d_2d_3$ orthogonal entangled states except the stopper state.

There are still some open questions left to be solved. In general $N$-partite systems $\otimes_{i=1}^{N}\mathbb{C}^{d_i}$, can we construct the strongest nonlocal set of the smallest size $\max_{i}\{\prod_{j=1}^{n}d_j/d_i+1\}$? Can we further improve the lower bound on the strongest nonlocal sets?
\section*{Acknowledgments}

This work is supported by the Natural Science Foundation of Hebei Province under Grant No. F2021205001, the National Natural Science Foundation of China under Grants No. 62272208 and No. 12371458 and the Guangdong Basic and Applied Basic Research Foundation under Grant
No. 2023A1515012074.

\end{document}